\documentclass[journal]{IEEEtran}

\usepackage{multicol}
\usepackage{multirow}
\usepackage{longtable}
\usepackage{subcaption}

\usepackage{balance}
\usepackage{cite}

\usepackage{algorithm}
\usepackage{algpseudocode}

\ifCLASSINFOpdf
   \usepackage[pdftex]{graphicx}
  % declare the path(s) where your graphic files are
  % \graphicspath{{../pdf/}{../jpeg/}}
  % and their extensions so you won't have to specify these with
  % every instance of \includegraphics
   \DeclareGraphicsExtensions{.pdf,.jpeg,.png}
\else
 
\fi

% *** MATH PACKAGES ***
\usepackage{mathrsfs,amsmath}
\usepackage{amssymb}

\usepackage{amsthm,hyperref,enumerate}
\newtheorem{theorem}{Theorem}

\theoremstyle{definition}
\newtheorem{defn}{Definition}

\theoremstyle{remark}
\newtheorem{remark}{Remark}
\newtheorem{lemma}{Lemma}

\theoremstyle{plain}
 \newtheorem{assumption}{Assumption}

\newcommand{\tabincell}[2]{\begin{tabular}{@{}#1@{}}#2\end{tabular}}

\usepackage{ragged2e}

\usepackage[amssymb]{SIunits}
\hypersetup{
	colorlinks=true,
	linkcolor=blue,
	filecolor=blue,      
	urlcolor=blue,
}

% correct bad hyphenation here
\hyphenation{op-tical net-works semi-conduc-tor}

\begin{document}

\title{Integrated Relative-Measurement-Based Network Localization and Formation Maneuver Control (Extended Version)}

\author{Xu Fang,  Lihua Xie, Xiaolei Li
\thanks{This work was supported 
by Nanyang Technological University under the Wallenberg-NTU Presidential Postdoctoral Fellowship and Projects of Major International (Regional) Joint Research Program under NSFC Grant no. 61720106011 and 62103352.}
\thanks{Xiaolei Li is with the school of Electrical Engineering, Yanshan University, Qinhuangdao, China, 066004. This work has been done when this author was with the School of Electrical and Electronic Engineering, Nanyang Technological University, Singapore, 639798.  (E-mail:xiaolei@ysu.edu.cn). }
\thanks{
Xu Fang and Lihua Xie are with the School of Electrical and Electronic Engineering, Nanyang Technological University, Singapore, 639798. (E-mail: fa0001xu@e.ntu.edu.sg; elhxie@ntu.edu.sg).} 
}

% make the title area
\maketitle

% As a general rule, do not put math, special symbols or citations
% in the abstract or keywords.

\begin{abstract}
This paper studies the problem of integrated distributed network localization and formation maneuver control. 
We develop an integrated
relative-measurement-based 
scheme, which only uses relative positions, distances, bearings, angles, ratio-of-distances, or their combination to achieve distributed network localization and formation maneuver control in $\mathbb{R}^d (d \ge 2)$. By exploring the localizability and invariance of the  target formation, the scale, rotation, and translation of the formation can be controlled simultaneously by only tuning the leaders' positions, i.e., the followers do not need to know parameters of the scale, rotation, and translation of the target formation.
The proposed method can globally drive the formation errors to zero in finite time over multi-layer $d\!+\!1$-rooted graphs. 
A simulation example is given to illustrate the theoretical results.
\end{abstract}

% Note that keywords are not normally used for peerreview papers.
\begin{IEEEkeywords}
Distributed network localization, formation maneuver control, integrated scheme, multi-agent system. 
\end{IEEEkeywords}

\IEEEpeerreviewmaketitle

% To realize scaling, rotational, and translational maneuvers simultaneously,  existing methods try to explore the invariant property of the target formation 

% For this case, the first method is to design observers for the followers to estimate the time-varying maneuver parameters \cite{yang2019stress, park2015formation},  but the observers consume computing and communication resources. The second method is to explore the invariant property of the target formation to avoid designing observers for the followers, 

\section{Introduction}

% Considering different level of intelligence, the agents are usually divided into two groups: leader group and follower group. In the formation maneuver control, the more practical case is that the time-varying maneuver parameters (scale, rotation, and translation of the formation) are only known to the leaders.

% Considering different level of intelligence, the agents are usually divided into two groups: leader group and follower group. In the formation maneuver control, the more practical case is that the time-varying maneuver parameters (scale, rotation, and translation of the formation) are only known to the leaders.

\IEEEPARstart{N}{etworked}
multi-agent systems have attracted recurring research interests from the control community due to their extensive  military and civilian applications such as surveillance and cooperative search \cite{ sun2018cooperative,mehdifar20222, an2020, Generalized2021}. There are two fundamental problems in networked multi-agent systems, namely, network localization and formation maneuver control.

% One potential advantage of a network system is that it allows agents to estimate their unknown positions in a distributed way, i.e., each agent achieves self-localization by only using relative measurements between the agent and its neighbors.

% \textcolor{blue}{In large scale networks, each agent is required to achieve self-localization by only using relative measurements between the agent and its neighbors.}

On the one hand, network localization aims to determine the unknown positions of some agents (called free nodes) 
by using the known positions of other agents (called anchor nodes) and inter-agent relative measurements.  The
existing distributed network localization methods are classified into five categories: angle-based \cite{jing2019angle1},  bearing-based \cite{zhao2016localizability,li2019globally},   ratio-of-distance-based \cite{fang2020}, distance-based \cite{diao2014barycentric}, and relative-position-based \cite{fang2023distributed}. The existing distributed localization methods \cite{jing2019angle1,  zhao2016localizability, diao2014barycentric, li2019globally,fang2020,fang2023distributed} mainly focus on static sensor networks. 
On the other hand, formation maneuver control aims to change the scale, rotation, and translation of a multi-agent system simultaneously.
Most existing distributed formation maneuver control methods 
\cite{ zhao2018affine,fang20211distributed,   xu2020affine,han2017tc, lin2015necessary}
require the information of inter-agent relative positions. But in many applications, agents can only obtain  non-relative-position measurements such as relative distances, bearings, angles, ratio-of-distances or their combination.
To tackle this problem, scientists try to explore non-relative-position-based integrated distributed localization and formation maneuver control schemes. There are two main challenges:  (\romannumeral1) 
how to handle the situation when agents are not localizable? Note that there is no guarantee that agents are localizable during the transient before they reach their target formation.  For example, to guarantee the localizability of a 2-D angle-based or bearing-based multi-agent system, each agent and its neighbors are required to be non-collinear at all times, which is difficult to be guaranteed during the transient before the agents reach their
target formation; (\romannumeral2) how to design an integrated estimation and control scheme to achieve a desired formation?

\begin{table*}[t]
    \begin{center}
    \caption{Comparison with existing integrated distributed network localization and formation maneuver control.}
        \begin{tabular}{c|c c c}
        \hline
      Methods   & Measurements  & Constraints & Advantages of our method
      \\ \hline
    %   \tabincell{c}{Relative-position-based \\ \cite{zhao2018affine,fang20211distributed, han2017fobarycentric, han2015three,  xu2020affine, han2015formation,lin2014distributed}} & Relative position & \tabincell{c}{2-D space \cite{han2015formation,lin2014distributed}, \\ 
    %   inapplicable to distributed localization
    %   \cite{xu2020affine,han2017fobarycentric, han2015three,zhao2018affine,fang20211distributed,han2015formation,lin2014distributed} } & \tabincell{c}{3-D space, \\
    %   applicable to  distributed localization
    %      }
    %   \\\hline
          Distance-based \cite{cao2011formation,jiang2016simultaneous} &   Distance
         & \tabincell{c}{ $\mathcal{A}1$, 2-D space, constant velocity of the leader, \\  circular motion of each agent}
         & $\mathcal{A}2$, 3-D space, no constraint on the agents   \\ 
        %   \hline
        %  Distance-based \cite{jiang2016simultaneous} & Distance
        %  & \tabincell{c}{$\mathcal{A}1$, 2-D space, constrained linear \\
        %  and circular motions of each agent} & $\mathcal{A}3$, 3-D space, no constraint on the agents  \\ 
        \hline
         Distance-based \cite{nguyen2019persistently, cao2019relative,han2018integrated} & \tabincell{c}{Distance, odometry, \\ derivative of distance}
         &  \tabincell{c}{$\mathcal{A}1$, persistently excited motion of the agents, \\ zero initial localization errors }   & \tabincell{c}{$\mathcal{A}2$, only need distance measurements, \\ no constraint on the agents} \\ 
        %  \hline
        %  Distance-based \cite{ han2018integrated} & \tabincell{c}{Distance, velocity, \\ derivative of distance}
        %  & $\mathcal{A}1$, zero initial localization errors & \tabincell{c}{$\mathcal{A}3$, only need distance measurements \\ no constraint on  initial localization errors} \\  
         \hline
          Bearing-based \cite{yang2020distributed} & Bearing
         &  \tabincell{c}{$\mathcal{A}1$, 2-D space, \\ constrained initial positions of the leaders}
        &  \tabincell{c}{$\mathcal{A}2$, 3-D space, \\ no constraint on the agents}
        \\
         \hline
         Angle-based \cite{chen2022simultaneous}  & Angle
        &  $\mathcal{A}1$, 2-D space
        & $\mathcal{A}2$, 3-D space
         \\
        \hline        Ratio-of-distance-based  & Ratio-of-distance & $-$ & Solved in this article \\
        \hline 
       Mixed-measurement-based  & Mixed measurements & $-$ & Solved in this article \\
        \hline
        \end{tabular}
\label{tab:accuracy-orb}
\\~\\
% $\mathcal{S}$ = Single-integrator multi-agent system;  $\mathcal{D}$ = Double-integrator multi-agent system.

\begin{flushleft}
$\mathcal{A}1$: All agents need to know the maneuver parameters (scale, rotation, and translation of the formation) or the followers need to estimate the maneuver parameters only known to the leaders; 

$\mathcal{A}2$: Only the leaders know the maneuver parameters, and the followers do not need to know the maneuver parameters;

$-$: To the best of our knowledge, 
there exists no result for 3-D angle-based, ratio-of-distance-based, or mixed-measurement-based integrated distributed  localization and  formation maneuver control. The term "mixed-measurement" means that the followers are allowed to have different types of relative measurements, e.g., some agents can measure only bearings, while others can measure only ratio-of-distances, angles, or distances.

% The term "mixed-measurement" means that  the followers are allowed to have different types of relative measurements,  e.g., some followers can measure only distances, while others can measure only bearings, angles, or
%  ratio-of-distances.
\end{flushleft}
% $\mathcal{A}2$: Only the leaders have access to the time-varying maneuver parameters, and the observers are required for the followers to estimate the time-varying maneuver parameters;

% The term "mixed-measurement" means that  the followers are allowed to have different types of relative measurements,  e.g., some followers can measure only distances, while others can measure only bearings, angles, or
% ratio-of-distances.

\end{center}
\end{table*}

The non-relative-position-based  integrated distributed localization and formation maneuver control are deeply-investigated in \cite{cao2011formation, jiang2016simultaneous,nguyen2019persistently,han2018integrated, cao2019relative,yang2020distributed,chen2022simultaneous}. The problem is that their methods can
only be applied in 2-D space or need to impose constraints on the motion of the agents (e.g., persistently excited motion of the agents) shown in Table \ref{tab:accuracy-orb}. If it is in 3-D space or there is no constraint on the motion of the agents, the relative positions among the agents cannot be estimated by their proposed distance-based \cite{cao2011formation, jiang2016simultaneous,nguyen2019persistently,han2018integrated, cao2019relative}, bearing-based \cite{yang2020distributed}, or angle-based \cite{chen2022simultaneous} estimators, and thus the formation maneuver control cannot be achieved. The work in \cite{cai2019integrated} reveals how control errors and localization errors affect formation accuracy.
In addition, the existing distributed methods \cite{cao2011formation, jiang2016simultaneous, chen2022simultaneous, nguyen2019persistently, cao2019relative, han2018integrated,  yang2020distributed} assume that the followers 
have identical relative measurement type and the information of time-varying maneuver parameters.

Motivated by the above limitations, we develop a novel relative-measurement-based two-mode switching control scheme, which
can not only overcome the limitations of the existing methods \cite{cao2011formation, jiang2016simultaneous,nguyen2019persistently,han2018integrated, cao2019relative,yang2020distributed,chen2022simultaneous} shown in Table \ref{tab:accuracy-orb}
but also solve the unsolved 3-D angle-based,  ratio-of-distance-based, or mixed-measurement-based integrated distributed localization and formation maneuver control problem. The contributions of this article lie in the following aspects: 
\begin{enumerate}[(1)]
    \item An integrated relative-measurement-based  distributed localization and formation maneuver scheme in $\mathbb{R}^d (d \! \ge \! 2)$ is proposed, where 
    each follower can measure one of the following five types of relative measurements:  relative position, bearing, distance, angle, or ratio-of-distance;
    \item The scale, rotation, and translation of the formation can be controlled simultaneously by only tuning the leaders' positions, while the followers do not need to know the time-varying maneuver parameters of the target formation; 
    \item A mode control scheme (maneuvering mode and maintaining mode)
    is proposed to overcome the challenge that agents may not be localizable at all times and achieve the target formation.

    % The proposed method can globally drive the formation errors to zero in finite time over multi-layer $(d\!+\!1)$-rooted graphs.
    
    %   If a follower is in the maneuvering mode, it can achieve self-localization and decrease its formation error. If a follower is in the maintaining mode, it will keep its tracking error bounded before switching to the maneuvering mode. 
    % \item Both single-integrator and double-integrator multi-agent systems are considered, and the proposed method can globally drive the formation errors to zero in finite time over directed acyclic graphs.
\end{enumerate}

The rest of this paper is organized as follows. 
The problem statement is given in Section \ref{problem}.
Section \ref{displace} presents the concept of displacement constraint and design of target formation.  Section \ref{mmm} introduces two operation modes for the followers.
The integrated distributed localization and formation maneuver control is given in Section \ref{single}. A simulation example is shown in Section \ref{simul} to verify the effectiveness of the proposed approach. Concluding remarks are given in Section \ref{concl}.

\section{Problem Statement}\label{problem}

\subsection{Notations}

% A diagraph without cycle is called a directed acyclic graph (DAG). 

% A cycle is a walk, which starts and ends at the same vertex with all other vertexes on the walk being distinct.

\begin{figure}[t]
\centering
\includegraphics[width=0.85\linewidth]{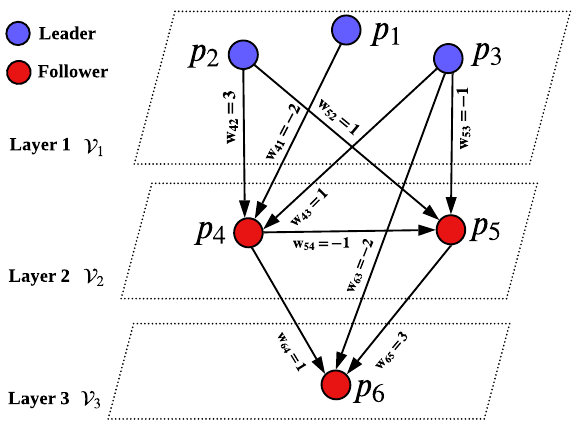}
\caption{A multi-layer $3$-rooted graph in 2-D space. The graph is not a multi-layer $3$-rooted graph if we remove the edge between leader $3$ and follower $6$.} 
\label{2d}
\end{figure}

The position of agent $i$ in $\mathbb{R}^d (d \! \ge \! 2)$ is denoted by $p_i$. Let $p=[p_1^T,\cdots, p_n^T]^T$ be a configuration of $n$ agents. A leader-follower formation of $n$  agents in $\mathbb{R}^d$ is denoted by $(\mathcal{G}, p)$, where $\mathcal{G}=\{ \mathcal{V},\mathcal{E}\}$ is the communication graph consisting of an agent set $\mathcal{V}=\{1, \cdots, n \}$ and an edge set $\mathcal{E} \subseteq \mathcal{V} \times \mathcal{V}$. Let $\mathcal{V}_l =\{1, \cdots, m \}$ and $\mathcal{V}_f\!=\! \{m\!+\!1, \cdots, n\}$ be the set of the leaders and followers, respectively. The edge $(i,j) \in \mathcal{E}$ indicates that agent $i$ can obtain information from agent $j$. The different kinds of inter-agent relative measurements are
\begin{equation}
\begin{array}{ll}
     &  e_{ij} \!=\!  p_j -p_i, \ \ \ \ \ \ g_{ij}\!=\! \frac{p_j -p_i}{d_{ij}},  \\
     & d_{ij}\!=\! \|p_j-p_i \|_2, \ \  \theta_{ijk}\!=\! \arccos (g^T_{ij}g_{ik}),
\end{array}
\end{equation}
where $e_{ij}, g_{ij}, d_{ij}$ are, respectively, the relative position, bearing, and distance between agent $i$ and agent $j$. 
{$\theta_{ijk} \! \in \! [0, \pi]$} is the angle between $g_{ij}$ and $g_{ik}$. $\| \cdot \|_2$ is the $\mathcal{L}_2$ norm.  Denote $\text{SO}(d)$  as the set of rotation matrices in $\mathbb{R}^d$. Let ${\mathbf{0}}_d , {\mathbf{1}}_d \in \mathbb{R}^{d}$ be the vector with all entries equal to zero and one, respectively. Let ${I}_d \in \mathbb{R}^{d \times d}$ be the identity matrix of dimension $d \times d$. Let $\text{Rank}(\cdot)$ be the rank of a matrix. An ordered sequence of agents $v_{i_1}, \cdots, v_{i_j}$ with $(v_{i_k}, v_{i_{k\!+\!1}}) \in \mathcal{E}$ for $k=1, \cdots, j-1$ is called a walk. A path is a walk without repeated agents.
Agent $i \in \mathcal{V}$ is 
called $d\!+\!1$-reachable from a set $V_c \in \mathcal{V}$ if there exists a path from 
$V_c$ to $i$ after removing any $d$ agents except agent $i$, i.e., there 
are $d+1$ disjoint paths from $V_c$ to $i$\cite{lin2015necessary}.

\begin{defn}\label{mul}
A graph $\mathcal{G}$ is called a multi-layer $d\!+\!1$-rooted graph in $\mathbb{R}^d (d \ge 2)$ if
\begin{enumerate}[(i)]
\item The agents in $\mathcal{G}$ are divided into $\kappa \!>\! 1$ subsets $\mathcal{V}_1,  \mathcal{V}_{2},$ $\cdots, \mathcal{V}_{{\kappa}}$, where $\mathcal{V}_i \cap \mathcal{V}_j = \emptyset$ if $i \neq j$. Agent $i$ is called in layer $g$ if  $i \in \mathcal{V}_{g} (1 \le g \le \kappa)$. Subset $\mathcal{V}_1$ includes all leaders, i.e., $\mathcal{V}_1=\mathcal{V}_l$. The union of the subsets $\mathcal{V}_{2}, \cdots, \mathcal{V}_{{\kappa}}$ include all followers, i.e.,  $\bigcup\limits_{g=2}^{\kappa} \mathcal{V}_g = \mathcal{V}_f$;
\item Each follower $i$
is $d\!+\!1$-reachable from the leader set $V_1$, and its neighbor set ${N}_i$ is given by
\begin{equation}\label{neighborset}
\mathcal{N}_i = \{ j \in \bigcup\limits_{s=1}^{g} \mathcal{V}_{s}:  j <i, \ (i,j) \in \mathcal{E}, \ i \in \mathcal{V}_{g} \}.
\end{equation} 
\end{enumerate}
\end{defn}

\begin{remark}
    A simple multi-layer $d\!+\!1$-rooted graph in $\mathbb{R}^2$ is given in Fig. \ref{2d}. The difference of "multi-layer $d \!+\! 1$-rooted graph" and "$d \!+\! 1$-rooted graph" \cite{han2017tc} is that "multi-layer $d \!+\! 1$-rooted graph" decouples the graph into several 
hierarchical layers, where the hierarchical decomposition algorithm is given in (\romannumeral1) and (\romannumeral2) of Definition \ref{mul}. The minimum number of neighbors of each follower in a multi-layer $d\!+\!1$-rooted graph is $d\!+\!1$ in $\mathbb{R}^d$.
\end{remark}

\subsection{Target Formation and Control Objective}

Let $p^*_l \!=\! [{p^*_1}^T, \cdots, {p^*_{m}}^T]^T \! \in \! \mathbb{R}^{md}$ and $p^*_f \!=\! [{p^*_{m\!+\!1}}^T , \cdots, {p^*_{n}}^T]^T \! \in \! \mathbb{R}^{(n-m)d}$ be the target positions of the leaders and followers in $\mathbb{R}^d$, respectively. Denote $p^*(t)=[{p^*_l}^T\!(t),{p^*_f}^T\!(t)]^T$ as a configuration of the time-varying target formation $(\mathcal{G}, p^*(t))$.
The time-varying target formation $(\mathcal{G}, p^*(t))$ is designed based on a constant nominal formation $(\mathcal{G}, r)$, i.e.,
\begin{equation}\label{ti}
    p^*(t)= \beta(t)[I_n \otimes Q(t)]r+ {\mathbf{1}}_n \otimes \delta(t),
\end{equation}
where $\beta(t) \in \mathbb{R}$, $Q(t) \in SO(d)$, and $\delta(t) \in \mathbb{R}^d$ are, respectively, time-varying scale, rotation, and translation parameter. $r=[r^T_1, \cdots, r^T_n] \! \in \! \mathbb{R}^{nd}$ is a constant nominal configuration to be designed later. The available information to the agents is: 
\begin{enumerate}[(i)]
    \item The time-invariant nominal configuration $r$;
    \item  Only the leaders know their own positions and the maneuver parameters $\beta(t), Q(t), \delta(t)$ in \eqref{ti};
    \item  Each follower has at least $d\!+\!1$ neighbors in $\mathbb{R}^d$, and can measure one 
    of the following five types of relative measurements: ratio-of-distance, angle. distance, bearing, or relative position.
\end{enumerate}

In this article, the followers are allowed to have different types of relative measurements, e.g., some agents can measure only bearings, while others can measure only ratio-of-distances, angles, or distances.
We consider that each agent is governed by a single-integrator dynamics.
\begin{equation}\label{singlem}
    \dot p_i = u_i, \ i=1, \cdots, n,
\end{equation}
where $u_i \in \mathbb{R}^d$ represents the control input of agent $i$. 
The control objectives of the leaders and followers are given by
\begin{align}\label{con1l}
     &  \ \ \ \ \ \  \lim\limits_{t \rightarrow \infty} (p_i(t)-p^*_i(t)) = \mathbf{0}, \ \ i \in \mathcal{V}_l,     \\
     & \begin{array}{ll}\label{con1}
    &  \left\{ \! \begin{array}{lll} 
       \lim\limits_{t \rightarrow \infty} (\hat p_i(t) - p_i(t)) = \mathbf{0}, \\
        \lim\limits_{t \rightarrow \infty} (p_i(t) - p^*_i(t)) = \mathbf{0}, 
    \end{array}\right.  i \in \mathcal{V}_f,
\end{array} 
\end{align}
where $\hat{p}_i(t)$ is the position estimate of agent $i$, and the target position
$p^*_i(t)$ of agent $i$ is the $i$-th element of $p^*(t)$ designed in \eqref{ti}, i.e.,
\begin{equation}\label{elem}
   p_i^*(t)= \beta(t) Q(t)r_i+ \delta(t).
\end{equation}

We aim to achieve the control objectives \eqref{con1l} and \eqref{con1} over multi-layer $d\!+\!1$-rooted graphs given in Definition \ref{mul}, where each follower obtains information by communicating with its neighbors or
self-sensing.

\section{Displacement Constraint and Design of Target Formation}\label{displace}

\subsection{Displacement Constraint}\label{displacon}

The relative positions between 
agent $i$ and its any $d\!+\!1 (d \! \ge \! 2)$ neighbors ${j_0}, \cdots, {j_{d}}$ are denoted by $e_{ij_0},  \cdots, e_{ij_d}$, respectively. 
Let $E_i :=
[e_{ij_0} |e_{ij_1} | \cdots |e_{ij_d} ] \in R^{d \times (d+1)}$ and $h_i\!=\! (h_{ij_0}, \cdots,  h_{ij_d})^T \in \mathbb{R}^{d\!+\!1}$. Since the number of elements in the vector $h_i$ is $d\!+\!1$ and \text{Rank}$(E_i) < d\!+\!1$,
there must exist a nonzero vector $h_i \neq \mathbf{0}$ such that $E_ih_i= \mathbf{0}$, i.e.,
\begin{equation}\label{root}
\sum\limits^{d}_{k=0}  h_{ij_k}e_{ij_k} \!=\! \mathbf{0},
\end{equation}
where $h_{ij_0},  \cdots,  h_{ij_d}$ are the displacement parameters and
\begin{equation}\label{fac1}
\sum\limits^{d}_{k=0} h^2_{ij_k} \neq 0.
\end{equation}

\begin{defn}
Equation \eqref{root} is defined as a displacement constraint for agent $i$ and its $d\!+\!1$ neighbors in $\mathbb{R}^d$.
\end{defn}

The displacement parameters $h_{ij_0},  \cdots,  h_{ij_d}$ in \eqref{root} describe the time-varying geometric relationship among the real-time positions of agent $i$ and its $d\!+\!1$ neighbors ${j_0}, \cdots, {j_{d}}$, which can be calculated by relative positions,  bearings, distances, angles, ratio-of-distances, or their combination among agent $i$ and its $d\!+\!1$ neighbors ${j_0}, \cdots, {j_{d}}$ (the details can be found in Appendix and \cite{fang20203}). 
From \eqref{root}, it has
\begin{equation}\label{zero}
    h_{ii}p_i \!=\! \sum\limits^{d}_{k=0} h_{ij_k}p_{j_k},
\end{equation}
where $h_{ii}= \sum\limits^{d}_{k=0} h_{ij_k}$. If $h_{ii} \neq 0$, agent $i$ can be localized by its $d\!+\!1$ neighbors ${j_0}, \cdots, {j_{d}}$, i.e.,
\begin{equation}\label{linear}
\begin{array}{ll}
    p_i \!=\! \sum\limits^{d}_{k=0} \frac{h_{ij_k}}{h_{ii}}p_{j_k}.
\end{array}
\end{equation}

\begin{defn}\label{locade}
Equation \eqref{root} with $h_{ii} \neq 0$  is formally defined as a localizable displacement constraint for agent $i$ and its $d\!+\!1$ neighbors in $\mathbb{R}^d$.
\end{defn}

\begin{defn}
An ambient space is the space surrounding a mathematical object along with the object itself.
A hyperplane in $\mathbb{R}^{d}$ is a subspace
whose dimension is one less than that of its ambient space, which can be described by
\begin{equation}\label{hy}
  z^Tb = c, 
\end{equation}
where $c \in \mathbb{R}$ is a constant and $b \in \mathbb{R}^d$ is a non-zero vector. $z \in \mathbb{R}^d$ is any point on the hyperplane. For example, the hyperplane is a line in 2-D space, or a plane in 3-D space. 
\end{defn}

\begin{lemma}\label{ls2}
Each agent $i$ and its $d\!+\!1$ neighbors ${j_0}, {j_1}, $ $\cdots, {j_{d}}$ can form a localizable displacement constraint if $p_{j_0},p_{j_1},  \cdots, p_{j_{d}}$ are not on a hyperplane in $\mathbb{R}^{d}$.
\end{lemma}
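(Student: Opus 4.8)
The plan is to recast the geometric non-degeneracy hypothesis as a nonsingularity statement about an augmented matrix, from which $h_{ii} \neq 0$ will fall out directly. First I would form
\[
M = \begin{bmatrix} p_{j_0} & p_{j_1} & \cdots & p_{j_d} \\ 1 & 1 & \cdots & 1 \end{bmatrix} \in \mathbb{R}^{(d+1) \times (d+1)},
\]
and prove that $p_{j_0}, \ldots, p_{j_d}$ fail to lie on a common hyperplane if and only if $M$ is nonsingular. By \eqref{hy}, the points share a hyperplane exactly when there exist $b \neq \mathbf{0}$ and $c$ with $p_{j_k}^T b = c$ for every $k$, i.e. when the row vector $[\,b^T, -c\,] \neq \mathbf{0}$ annihilates $M$ on the left. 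Because the last row of $M$ is $\mathbf{1}_{d+1}^T$, any left null vector is forced to have $b \neq \mathbf{0}$; hence the existence of such a vector is precisely the singularity of $M$, and the claimed equivalence follows.

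Next I would tie the displacement constraint \eqref{root} to $M$. Writing $P = [\,p_{j_0} \mid \cdots \mid p_{j_d}\,]$, one has $E_i = P - p_i \mathbf{1}_{d+1}^T$, so that $E_i h_i = P h_i - h_{ii}\, p_i$ for every $h_i$, where $h_{ii} = \mathbf{1}_{d+1}^T h_i = \sum_{k=0}^d h_{ij_k}$. Thus $E_i h_i = \mathbf{0}$ is the same as $P h_i = h_{ii}\, p_i$, which together with $\mathbf{1}_{d+1}^T h_i = h_{ii}$ can be stacked into the single relation
\[
M h_i = h_{ii} \begin{bmatrix} p_i \\ 1 \end{bmatrix}.
\]

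To conclude, I would observe that $E_i \in \mathbb{R}^{d \times (d+1)}$ always has a nontrivial kernel, so a nonzero $h_i$ solving \eqref{root} exists and satisfies \eqref{fac1}. When the points avoid every hyperplane, $M$ is nonsingular, whence $M h_i \neq \mathbf{0}$; but the right-hand side $h_{ii}[\,p_i^T, 1\,]^T$ can vanish only if $h_{ii} = 0$, since its last entry is exactly $h_{ii}$. Therefore $h_{ii} \neq 0$, so by Definition \ref{locade} the constraint is localizable. I expect the main obstacle to be the first step, namely justifying that no left null vector of $M$ can have $b = \mathbf{0}$, so that singularity of $M$ matches the coplanarity notion of \eqref{hy} exactly rather than a strictly weaker condition; the remaining algebra is routine once this bridge between the hyperplane hypothesis and $\text{Rank}(M)$ is in place.
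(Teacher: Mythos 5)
Your proof is correct, and it takes a genuinely different route from the paper's. The paper argues by contradiction purely with difference vectors: assuming $h_{ii}=\sum_{k=0}^{d}h_{ij_k}=0$, it eliminates $h_{ij_d}$, concludes that $e_{j_dj_0},\cdots,e_{j_dj_{d-1}}$ are linearly dependent, and then extracts a nonzero null vector $b$ of $F_j^T$, where $F_j=[e_{j_dj_0},\cdots,e_{j_dj_{d-1}}]$, to exhibit an explicit hyperplane $z^Tb=p_{j_d}^Tb$ containing all the neighbors, contradicting the hypothesis. You instead pass to homogeneous coordinates: you form the augmented matrix $M$ (positions stacked over a row of ones), prove the two-sided characterization ``not on a hyperplane $\Leftrightarrow$ $M$ nonsingular'' (your key observation that a nonzero left null vector $[b^T,-c]$ cannot have $b=\mathbf{0}$, because the all-ones row would then give $-c\mathbf{1}_{d+1}^T\neq\mathbf{0}$, is exactly the needed bridge and is sound), and then use the identity $E_ih_i=Ph_i-h_{ii}p_i$, stacked as $Mh_i=h_{ii}[p_i^T,1]^T$, to force $h_{ii}\neq 0$ in one line. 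Both arguments reduce coplanarity to a rank statement, but they trade different things: the paper's version is constructive about the hyperplane normal and stays entirely within the relative-position objects $e_{jk}$ that the rest of the paper manipulates, whereas yours buys an equivalence rather than a single implication, makes the role of the weight-sum row $\mathbf{1}_{d+1}^T$ transparent, and yields a free by-product the paper does not state: since $M$ is invertible, every solution of $E_ih_i=\mathbf{0}$ satisfies $h_i=h_{ii}M^{-1}[p_i^T,1]^T$, so the kernel of $E_i$ is exactly one-dimensional and the displacement parameters in \eqref{root} are unique up to scale.
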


\begin{proof}
We can prove Lemma \ref{ls2} by contradiction. If the displacement constraint \eqref{root} is not localizable, we have $h_{ii} = \sum\limits^{d}_{k=0}  h_{ij_k}=0$, i.e.,
\begin{equation}\label{paraw}
h_{ij_d} \!=\! -
\sum\limits^{d-1}_{k=0} h_{ij_k}.
\end{equation}

Combining \eqref{zero} and \eqref{paraw}, it has
\begin{equation}\label{paraw3}
\sum\limits^{d-1}_{k=0}h_{ij_k}e_{j_dj_k}
= \mathbf{0}.  
\end{equation}

It is concluded from $\sum\limits^{d}_{k=0} h^2_{ij_k} \neq 0$ and \eqref{paraw}  that
\begin{equation}\label{paraw2}
\sum\limits^{d-1}_{k=0}  h^2_{ij_k} \neq 0.
\end{equation}

Then, we can know from \eqref{paraw3} and \eqref{paraw2} that $e_{j_dj_0}, e_{j_dj_1}, \cdots, e_{j_dj_{d\!-\!1}}$ are linearly dependent. Let $F_j \!=\! [ e_{j_dj_0}, e_{j_dj_1}, \cdots, e_{j_dj_{d\!-\!1}}] \! \in \! \mathbb{R}^{d \times d}$. Since $e_{j_dj_0},e_{j_dj_1}, $ $\cdots, e_{j_dj_{d\!-\!1}}$ are linearly dependent, the matrix $F_j$ is not full rank.  Since $\text{Rank}(F_j^T)= \text{Rank}(F_j)$, the matrix $F_j^T$ is also not full rank. From the matrix theory, there must exist a non-zero vector $b \in \mathbb{R}^d$ such that $F_j^T b = \mathbf{0}$, i.e.,
\begin{equation}
\begin{array}{ll}
\left[ \!
	\begin{array}{ll}
e_{j_dj_0}, \
e_{j_dj_1}, \
	\cdots, \
e_{j_dj_{d\!-\!1}} \! \!
	\end{array}
	\right]^T b = \mathbf{0}.
\end{array}
\end{equation}

Then, it has
\begin{equation}\label{li}
\begin{array}{ll}
\left[ \!
	\begin{array}{ll}
	p_{j_0}, \
	p_{j_1}, \
	\cdots, \
	p_{j_{d\!-\!1}} \! \!
	\end{array}
	\right]^T b = p_{j_d}^Tb \otimes \mathbf{1}_d.
\end{array}
\end{equation}

From \eqref{li}, we can know that $p_{j_0}, p_{j_1},  \cdots, p_{j_{d}}$ are on the following hyperplane:
\begin{equation}
    z^Tb = c,
\end{equation}
where $c =p_{j_d}^Tb$. Hence, if $p_{j_0}, p_{j_1},  \cdots, p_{j_{d}}$ are not on a hyperplane in $\mathbb{R}^{d}$, the displacement constraint \eqref{root} is localizable.
\end{proof}

\subsection{Design of Target Formation}\label{local}

It is shown in \eqref{ti} that
the time-varying target formation $(\mathcal{G}, p^*(t))$ is designed based on a constant nominal formation $(\mathcal{G}, r)$. We will first introduce how to design a constant nominal formation $(\mathcal{G}, r)$. 
Let ${r_l} \!=\! [r_1^T, \cdots, r_{m}^T]^T$
and $ {r_f} \!=\! [r_{m\!+\!1}^T , \cdots, r_{n}^T]^T$ be the nominal positions of the leaders and followers, respectively. Suppose there are $d\!+\!1$  leaders in $\mathbb{R}^d$, i.e., $m \!=\! d\!+\!1$. 
From \eqref{root}, for the nominal positions of each follower $i$ 
and the chosen $d\!+\!1$ neighbors $ {j_0},{j_1}, \cdots, {j_d} \in \mathcal{N}_i$ in $\mathbb{R}^d$, we can construct a displacement constraint 
\begin{equation}\label{2di}
\sum\limits^{d}_{k=0}  w_{ij_k}r_{ij_k} \!=\! \mathbf{0}, \ i \in \mathcal{V}_f,
\end{equation}
where $r_{ij}\!=\!r_j\!-\!r_i$, and $w_{ij_0},  \cdots,  w_{ij_d}$ are the displacement parameters. Then, \eqref{2di} can be rewritten as
\begin{equation}\label{we}
w_{ii}r_i \!=\! \sum\limits^{d}_{k=0} w_{ij_k}r_{j_k}, 
\end{equation}
where $w_{ii}\!=\! \sum\limits_{k = 0}^{d}w_{ik} 
$. There will be $n\!-\!m$ displacement constraints for the follower group, which can be written in a compact form as
\begin{equation}\label{form}
(\Omega_f \otimes I_d)r=\mathbf{0},  
\end{equation}
where $r=[r^T_1, \cdots, r^T_n] \! \in \! \mathbb{R}^{nd}$ is a constant nominal configuration and $\Omega_f \! \in \! \mathbb{R}^{(n\!-\!m)\times (n\!-\!m)}$ is called the follower matrix satisfying
\begin{equation}\label{zer}
\begin{array}{ll}
    &[\Omega_f]_{ij} \!=\!  \left\{ \! \begin{array}{lll} 
    -w_{ij}, & 
     j \in \mathcal{N}_i, \ j \neq i, \\
    \ \  0, &   j \notin \mathcal{N}_i, \ j \neq i, \\
    \sum\limits_{j \in \mathcal{N}_i}w_{ij},
    & j=i. \\
    \end{array}\right. 
\end{array} 
\end{equation}

Since the agents
are divided into leader group and follower group, the follower matrix $\Omega_f$ can be partitioned as
\begin{align}\label{aef}
 \Omega_f=[\begin{array}{ll}
    \Omega_{fl}  &  \Omega_{f\!f}
    \end{array}],
\end{align}
where $ \Omega_{fl} \in \mathbb{R}^{(n\!-\!m)\times m}$, $\Omega_{f\!f} \in \mathbb{R}^{(n\!-\!m)\times (n\!-\!m)}$. Then, \eqref{form} becomes
\begin{equation}\label{2dif}
   (\Omega_{fl} \otimes I_d)r_l +    (\Omega_{f\!f} \otimes I_d)r_f = \mathbf{0}.
\end{equation}

If the matrix $\Omega_{f\!f}$ is nonsingular, it yields  from \eqref{2dif} that
\begin{equation}
 r_f = -(\Omega_{f\!f}^{-1}\Omega_{fl}\otimes I_d) r_l.   
\end{equation}

Hence, $r_f$ can be uniquely determined by $r_l$ if the matrix $\Omega_{f\!f}$ is nonsingular. 
The constant edge weights such as $w_{ij}$ in \eqref{zer} will then be used to describe the constant geometric relationship among the time-varying target positions of follower $i$ and its neighbors.

\begin{defn}
A leader-follower nominal formation $(\mathcal{G}, r)$ is said to be localizable if the nominal positions of the  followers $r_f$ can be uniquely determined by those of the leaders $r_l$. A leader-follower time-varying target formation $(\mathcal{G}, p^*(t))$ is said to be localizable if the target positions of the  followers $p^*_f(t)$ can be uniquely determined by those of the leaders $p^*_l(t)$.
\end{defn}

\begin{theorem}\label{lsl2}
A  leader-follower nominal formation $(\mathcal{G}, r)$ over a multi-layer $d\!+\!1$-rooted graph is localizable if
for each follower $i (i=m\!+\!1, \cdots, n)$, the nominal positions of its $d\!+\!1$ neighbors $ r_{j_0},r_{j_1}, \cdots, r_{j_d}$ are not on a hyperplane in $\mathbb{R}^{d}$.
\end{theorem}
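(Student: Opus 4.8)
The plan is to show that the follower matrix $\Omega_{f\!f}$ in \eqref{aef} is nonsingular, since by the discussion preceding the theorem, nonsingularity of $\Omega_{f\!f}$ is exactly what guarantees that $r_f$ is uniquely determined by $r_l$ via $r_f = -(\Omega_{f\!f}^{-1}\Omega_{fl}\otimes I_d)r_l$, which is the definition of localizability. So the entire proof reduces to establishing that $\det(\Omega_{f\!f}) \neq 0$ under the stated hyperplane hypothesis.

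First I would exploit the layered structure of the multi-layer $d\!+\!1$-rooted graph from Definition \ref{mul}. The key combinatorial fact is that the neighbor set $\mathcal{N}_i$ in \eqref{neighborset} only contains agents $j < i$ lying in layers $\mathcal{V}_1, \cdots, \mathcal{V}_g$ where $i \in \mathcal{V}_g$; that is, each follower only depends on leaders and on earlier followers in the same or lower layers. If the followers are indexed consistently with this layering, this triangular dependence structure should make $\Omega_{f\!f}$ a (block) lower-triangular matrix, so its determinant factors as the product of its diagonal entries $[\Omega_{f\!f}]_{ii} = w_{ii} = \sum_{k=0}^d w_{ij_k}$. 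Hence $\Omega_{f\!f}$ is nonsingular precisely when $w_{ii} \neq 0$ for every follower $i$.

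Next I would invoke Lemma \ref{ls2} to convert the hyperplane hypothesis into the condition $w_{ii} \neq 0$. By hypothesis, for each follower $i$ the nominal positions $r_{j_0}, r_{j_1}, \cdots, r_{j_d}$ of its $d\!+\!1$ neighbors are not on a hyperplane in $\mathbb{R}^d$. By Lemma \ref{ls2} (applied to the nominal configuration $r$ with weights $w_{ij_k}$ playing the role of the displacement parameters in \eqref{2di}), this guarantees that the displacement constraint for follower $i$ is \emph{localizable} in the sense of Definition \ref{locade}, i.e., $w_{ii} = \sum_{k=0}^d w_{ij_k} \neq 0$. Since this holds for every follower, every diagonal entry of $\Omega_{f\!f}$ is nonzero, so $\det(\Omega_{f\!f}) = \prod_{i} w_{ii} \neq 0$, and $\Omega_{f\!f}$ is nonsingular as required.

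I expect the main obstacle to be rigorously justifying the (block) triangular structure of $\Omega_{f\!f}$, which is what makes the determinant factor into diagonal entries. This hinges on the precise ordering of the followers induced by the layering and the constraint $j < i$ in \eqref{neighborset}: one must argue that with a suitable consistent relabeling, every off-diagonal nonzero entry $[\Omega_{f\!f}]_{ij} = -w_{ij}$ occurs only for $j < i$, so that the within-follower-block part is strictly lower triangular. A subtle point is that a follower in layer $g$ may have neighbors that are leaders (handled by $\Omega_{fl}$, not $\Omega_{f\!f}$) as well as followers in the same layer $g$; one has to confirm that same-layer neighbors still satisfy $j < i$ and do not create cyclic dependencies among followers, so that the triangular argument goes through. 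Once the triangularity is secured, the rest follows directly from Lemma \ref{ls2} and the determinant computation.
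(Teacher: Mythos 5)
Your proposal is correct and follows essentially the same route as the paper's own proof: the paper likewise observes that the neighbor-set constraint $j<i$ in \eqref{neighborset} makes $\Omega_{f\!f}$ lower triangular, so nonsingularity reduces to $w_{ii}\neq 0$ for every follower, which is then obtained from Lemma \ref{ls2} together with \eqref{we} under the hyperplane hypothesis. The triangularity point you flag as a potential obstacle is exactly what the paper asserts directly, and it is immediate from \eqref{neighborset}, since every follower-neighbor (same layer or lower) must have a strictly smaller index, ruling out cyclic dependencies.
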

\begin{proof}
Based on a multi-layer $d\!+\!1$-rooted graph, we can know that $\Omega_{f\!f}$ in \eqref{2dif} is a lower triangular matrix, i.e.,
\begin{equation}\label{up2}
  \Omega_{f\!f} \!=\!
  \left[\begin{array}{lllll}
   \ \   {w_{(m\!+\!1)(m\!+\!1)}}   &  &  &  & \ \ 0   \\
    -w_{(m\!+\!2)(m\!+\!1)} &  \ \  w_{(m\!+\!2)(m\!+\!2)}    &    \\
   \ \ \ \ \ \  \vdots &  \ \ \ \ \ \ \vdots &    \ddots & &  \\
     -w_{n(m\!+\!1)} & -w_{n(m\!+\!2)}  & \cdots & & w_{nn}
    \end{array}\right].
\end{equation}

It is clear that the matrix $\Omega_{f\!f}$ is nonsingular if and only if its diagonal entries are non-zero, i.e.,
\begin{equation}\label{up3}
 w_{ii} \neq 0, \  i=m+1, \cdots, n. 
\end{equation}

From Lemma \ref{ls2} and \eqref{we}, we can conclude that $w_{ii} \neq 0, i=m\!+\!1, \cdots, n$ if for each follower $i (i=m\!+\!1, \cdots, n)$, the nominal positions of its $d\!+\!1$ neighbors $ r_{j_0},r_{j_1}, \cdots, r_{j_d}$
are not on a hyperplane in $\mathbb{R}^{d}$. Then, the conclusion follows.
\end{proof}

A simple example of 2-D localizable nominal formation over a multi-layer $3$-rooted graph is given in Fig. \ref{2d}, where the nominal positions of the leaders $r_l=[r_1^T ,r_2^T, r_3^T]^T$ and the nominal positions of the followers 
$r_f= [r_4^T,r_5^T, r_6^T]^T$ are
\begin{equation}
\begin{array}{ll}
     &  r_1 = [1,0]^T, \ \ r_2 = [0,1]^T, \ \
      r_3 = [0,-1]^T, \\
      & r_4 = [-1,1]^T, \ \ r_5 = [-1,-1]^T, \ \ r_6 = [-2, 0]^T.
\end{array}
\end{equation}

The corresponding matrices $ \Omega_{fl}$ and $ \Omega_{f\!f}$ are calculated as 
\begin{equation}\label{o4}
\Omega_{fl} \!=\! \left[\begin{array}{lll}
   -2 & 3 &  \ \  1   \\
  \ \ 0 & 1 & -1 \\
   \ \  0 & 0 &   -2
    \end{array}\right], \ \  
     \Omega_{f\!f} = \left[\begin{array}{lll}
    -2 &  0 &  \ \ 0 \\
    -1 &  1 &  \ \ 0 \\
   \ \ 1 &  3 & -2
    \end{array}\right]. 
\end{equation}

It can be verified that  $r_f=-(\Omega_{f\!f}^{-1}\Omega_{fl}\otimes I_2) r_l$, i.e., the 2-D nominal formation $(\mathcal{G}, r)$ shown in Fig. \ref{2d} is localizable.

\begin{lemma}\label{ta2}
A leader-follower time-varying target formation $(\mathcal{G}, p^*(t))$ in \eqref{ti} is  localizable if 
its constant nominal formation $(\mathcal{G}, r)$ is localizable.
\end{lemma}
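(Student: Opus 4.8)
The plan is to show that the \emph{same} displacement weights $w_{ij}$ that define the nominal formation remain valid for the time-varying target formation, so that the identical follower matrix $\Omega_f$ --- and in particular the nonsingular block $\Omega_{ff}$ --- relates $p^*_l(t)$ to $p^*_f(t)$. The entire argument rests on the invariance of the displacement constraint under scaling, rotation, and translation.

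First I would write out the relative target positions. From \eqref{elem}, for any pair $(i,j)$ we have $p^*_j(t) - p^*_i(t) = \beta(t) Q(t)(r_j - r_i) = \beta(t)Q(t)\, r_{ij}$, so the translation $\delta(t)$ cancels in relative coordinates and every relative target position is the \emph{common} linear image $\beta(t)Q(t)$ of the corresponding nominal relative position $r_{ij}$.

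Second I would substitute this into the nominal displacement constraint \eqref{2di}. Pulling the common factor out of the weighted sum gives $\sum_{k=0}^{d} w_{ij_k}\big(p^*_{j_k}(t) - p^*_i(t)\big) = \beta(t)Q(t)\sum_{k=0}^{d} w_{ij_k}\, r_{ij_k} = \mathbf{0}$, where the last equality is exactly \eqref{2di}. Hence the same weights $w_{ij}$ form a displacement constraint for the target positions, which rewrites (as in \eqref{we}) as $w_{ii}\, p^*_i(t) = \sum_{k=0}^{d} w_{ij_k}\, p^*_{j_k}(t)$ with the identical $w_{ii}=\sum_{k=0}^{d} w_{ij_k}$.

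Finally, stacking these constraints over all followers yields $(\Omega_f \otimes I_d)\,p^*(t) = \mathbf{0}$ with exactly the $\Omega_f$ built from \eqref{zer}, i.e.\ $(\Omega_{fl}\otimes I_d)p^*_l(t) + (\Omega_{ff}\otimes I_d)p^*_f(t) = \mathbf{0}$. Since $(\mathcal{G},r)$ is localizable, $\Omega_{ff}$ is nonsingular, so $p^*_f(t) = -(\Omega_{ff}^{-1}\Omega_{fl}\otimes I_d)\,p^*_l(t)$ uniquely determines the followers' target positions from those of the leaders, which is precisely localizability of $(\mathcal{G},p^*(t))$. The step I expect to be the crux is the invariance property: one must verify that $\beta(t)Q(t)$ factors cleanly out of the weighted sum and that $\delta(t)$ drops out in relative coordinates, since it is exactly this that lets the time-invariant weights --- and hence the time-invariant, nonsingular $\Omega_{ff}$ --- be reused for every $t$. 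Once invariance is established, the nonsingularity of $\Omega_{ff}$ is inherited for free from the nominal localizability.
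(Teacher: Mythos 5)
Your proof is correct and follows essentially the same route as the paper: both establish that the nominal displacement constraints carry over to the target positions, yielding $(\Omega_{fl}\otimes I_d)p^*_l(t) + (\Omega_{ff}\otimes I_d)p^*_f(t) = \mathbf{0}$, and then invert the nonsingular $\Omega_{ff}$. The only cosmetic difference is that you verify the invariance constraint-by-constraint (translation cancelling in relative coordinates, $\beta(t)Q(t)$ factoring out), whereas the paper does the same computation in stacked Kronecker form, using the zero row-sum identity $\Omega_{fl}\mathbf{1}_m + \Omega_{ff}\mathbf{1}_{n-m} = \mathbf{0}$ to kill the translation term --- these are the same argument.
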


\begin{proof}
We can know from \eqref{zer} that
\begin{equation}\label{tt2}
    \Omega_{fl} \cdot \mathbf{1}_{m} + \Omega_{f\!f} \cdot \mathbf{1}_{n\!-\!m} = \mathbf{0}.
\end{equation}

Combining\eqref{ti}, \eqref{2dif}, and \eqref{tt2}, it has
\begin{equation}\label{ty1}
    \begin{array}{ll}
         & (\Omega_{f\!f} \otimes I_d )p^*_f(t) + (\Omega_{fl} \otimes I_d) p^*_l(t) \\
         & = \beta(t) [(\Omega_{fl} \otimes Q(t))r_l \!+\! (\Omega_{f\!f} \otimes Q(t))r_f] \\
         & \ \ \ + [\Omega_{fl} \cdot \mathbf{1}_{m} + \Omega_{f\!f} \cdot \mathbf{1}_{n\!-\!m} ]\otimes \delta(t) \\
        & \!= \!\mathbf{0}.
    \end{array}
\end{equation}

If the nominal formation $(\mathcal{G}, r)$ is localizable, i.e., the matrix $\Omega_{f\!f}$ in \eqref{ty1} is nonsingular, it has
\begin{equation}\label{re}
p_f^*(t) = -(\Omega_{f\!f}^{-1}\Omega_{fl}\otimes I_d) p_l^*(t).
\end{equation}

Hence, the time-varying target formation $(\mathcal{G}, p^*(t))$ is localizable.

\end{proof}

\begin{remark}
It is clear from \eqref{ty1} that the constant edge weights such as $w_{ij}$ are invariant to translation $\delta(t)$, rotation $Q(t)$, and scaling $\beta(t)$ of the nominal formation. Hence, the followers can use this invariance property to achieve formation maneuver control without the need of knowing the maneuver parameters $\delta(t),Q(t),\beta(t)$ shown in Section \ref{single}, i.e., the followers do not need know their target positions.
\end{remark}

\section{Maneuvering Mode and Maintaining Mode}\label{mmm}

\begin{assumption}\label{csr}
For each follower $i \! \in 
 \! \mathcal{V}_f$  in $\mathbb{R}^d$, among the neighbors of follower $i$, there exist 
 $d\!+\!1$ neighbors ${j_0}, \cdots, {j_{d}} \! \in 
\! \mathcal{N}_i$ whose nominal positions are not on a hyperplane. The target configuration $p^*(t)$ in \eqref{ti} is first order differentiable.
\end{assumption}

Note that under Assumption \ref{csr}, we can know from Theorem \ref{lsl2} and Lemma \ref{ta2} that the time-varying target formation $(\mathcal{G}, p^*(t))$ in \eqref{ti} is localizable.  
Then, the control objectives \eqref{con1l} and  \eqref{con1} become
\begin{equation}\label{pro1}
\begin{array}{ll}
    &  \left\{ \! \begin{array}{lll} 
    \lim\limits_{t \rightarrow \infty} (p_i(t)-p^*_i(t)) = \mathbf{0}, \ \ i \in \mathcal{V}_l,  \\
       \lim\limits_{t \rightarrow \infty}(\hat p_i(t)-p_i(t)) = \mathbf{0}, \ \  i \in \mathcal{V}_f, \\
       \lim\limits_{t \rightarrow \infty} (p_i(t) - \sum\limits_{j \in \mathcal{N}_i} \frac{w_{ij}}{w_{ii}}   \hat {p}_{j}(t) )= \! \mathbf{0}, \ i \in \mathcal{V}_f,
    \end{array}\right. 
\end{array} 
\end{equation}
where $\mathcal{N}_i$ is given in \eqref{neighborset} and $\hat p_j = p_j, j \in \mathcal{V}_l$. 

\begin{remark}
From \eqref{ty1}, it has $p^*_f(t) \!+ \!(\Omega_{f\!f}^{-1}\Omega_{fl} \otimes I_d)p^*_l(t) \!=\!  \mathbf{0}$. The leaders know their own positions, i.e., $\hat p_l(t) = p_l(t)$. Based on the first and second subequations of \eqref{pro1}, the third subequation of \eqref{pro1} is equivalent to $\lim\limits_{t \rightarrow \infty} [p_f(t) + (\Omega_{f\!f}^{-1}\Omega_{fl} \otimes I_d)p^*_l(t) ] \!=\!  \mathbf{0}$, i.e., $\lim\limits_{t \rightarrow \infty} [p_i(t) - p^*_i(t) ] \!=\!  \mathbf{0},  i \! \in \! \mathcal{V}_f$.
\end{remark}

Based on \eqref{pro1}, 
define the tracking error $\bar e_i$ and position estimation error $\hat e_i$ of follower $i \in \mathcal{V}_f$ as
\begin{align}
         &     \bar e_i(t) = p_i(t)  -\!\sum\limits_{j \in \mathcal{N}_i}\frac{w_{ij}}{w_{ii}}{\hat p}_{j}(t). \ \label{tra1} \\
         & \hat e_i(t) = \hat p_i(t) -p_i(t).\label{tra2}
\end{align}

As shown in \eqref{root}, at any time instant $t$, follower $i$ and its $d\!+\!1$ neighbors ${j_0}, \cdots, {j_{d}} \in \mathcal{N}_i$ can form a displacement constraint, i.e.,
\begin{equation}\label{root1}
 \sum\limits_{j \in \mathcal{N}_i}h_{ij}(t) e_{ij}(t) \!=\! \mathbf{0} \Rightarrow  h_{ii}(t)p_i(t) \!=\! \sum\limits_{j \in \mathcal{N}_i}h_{ij}(t) p_{j}(t),
\end{equation}
where $h_{ii}(t) \!=\! \sum\limits_{j \in \mathcal{N}_i}h_{ij}(t)$, and the displacement parameters $h_{ij}(t), j \in \mathcal{N}_i$ can be calculated by the relative positions, bearings, distances, angles, ratio-of-distances, or their combination among agent $i$ and its $d\!+\!1$ neighbors ${j_0}, \cdots, {j_{d}} \in \mathcal{N}_i$ (the details can be found in Appendix and \cite{fang20203}). 
Under Assumption \ref{csr},
we can conclude from  \eqref{ti} and Lemma \ref{ls2}  that $h_{ii}(t) \! \neq \! 0$ if 
its $d\!+\!1$ neighbors ${j_0}, \cdots, {j_{d}} \in \mathcal{N}_i$ reach their target positions, and then \eqref{root1} can be rewritten as
\begin{equation}\label{we2}
p_i(t) \!=\! \sum\limits_{j \in \mathcal{N}_i} \frac{h_{ij}(t)}{h_{ii}(t)}p_{j}(t).
\end{equation}

Although we can guarantee the localizability of the target formation 
$(\mathcal{G}, p^*(t))$, there is no guarantee on the localizability of follower agents during the transient before all agents reach the target formation. To tackle this problem, two operation modes (maneuvering mode and maintaining mode) are designed for each follower. A remarkable advantage of the two operation modes is that we do not require that all the followers be always localizable by their neighbors before all agents reach the target formation.

\begin{defn}\label{ma1}
Follower $i \! \in \! \mathcal{V}_f$ is in the maneuvering mode at time instant $t$ if 
its $d\!+\!1$ neighbors ${j_0}, \cdots, {j_{d}} \in \mathcal{N}_i$ reach their target positions. 
\end{defn}

\begin{defn}\label{ma2}
Follower $i \! \in \! \mathcal{V}_f$ is in the maintaining mode at time instant $t$ if at least one of its $d\!+\!1$ neighbors ${j_0}, \cdots, {j_{d}} \in \mathcal{N}_i$ does not reach its target position.
\end{defn}

The immediate question in Definition \ref{ma1} and Definition \ref{ma2} is how each agent knows whether it has arrived at its target position? Since the leaders know their own positions and target positions, each leader $i \! \in \! \mathcal{V}_l$ can know whether it has arrived at its target position. 
Next, we will introduce how each follower $i \! \in \! \mathcal{V}_f$ knows whether 
it achieves self-localization and has arrived at its target position. 
Although the followers do not know their target positions, 
each follower $i \! \in \! \mathcal{V}_f$ under Assumption \ref{csr} can know that its task is completed 
if and only if
\begin{equation}\label{get}
\begin{array}{ll}
    &  \left\{ \! \begin{array}{lll} 
      % h_{ii}(t) \neq 0, \\
       \sum\limits_{j \in \mathcal{N}_i} (\frac{h_{ij}(t)}{h_{ii}(t)} \!-\! \frac{w_{ij}}{w_{ii}}) \hat p_{j}(t) = \mathbf{0},  \\
 \hat p_i(t) - \sum\limits_{j \in \mathcal{N}_i} \frac{h_{ij}(t)}{h_{ii}(t)} \hat p_{j}(t) \!=\! 
\mathbf{0},  
       \\
       \hat p_{j}(t) = p_{j}(t)= p^*_{j}(t), \ j \in \mathcal{N}_i.
    \end{array}\right. 
\end{array} 
\end{equation}

\begin{remark}
Each follower $i \! \in \! \mathcal{V}_f$ needs to communicate with its neighbor $j \! \in \! \mathcal{N}_i$ to obtain its neighbor's position estimate $\hat p_j$ and to know whether its neighbor  $j \! \in \! \mathcal{N}_i$ satisfies the third subequation of \eqref{get}.
The position estimates $\hat p_i, \hat p_j$ and control inputs $u_i, u_j$ of each follower $i \! \in \! \mathcal{V}_f$ and its neighbor $j \! \in \! \mathcal{N}_i$ will be designed in Section \ref{single} 
to guarantee that the condition \eqref{get} holds. 
If the third subequation of
\eqref{get} holds, we can know from \eqref{we2} that
the first and second subequations of
\eqref{get} are equivalent to
\begin{equation}\label{kj}
    \begin{array}{ll}
         & \hspace{-0.4cm} \sum\limits_{j \in \mathcal{N}_i} \! \!(\frac{h_{ij}(t)}{h_{ii}(t)} \!-\! \frac{w_{ij}}{w_{ii}}) \hat p_{j}(t) \!=\! p_i(t) \!-\!  \frac{w_{ij}}{w_{ii}} p^*_{j}(t) \!=\! p_i(t) \!-\! p^*_{i}(t) \!=\! \mathbf{0}, \\
         & \hspace{-0.4cm}  \hat p_i(t)  - \! \sum\limits_{j \in \mathcal{N}_i} \! \! \frac{h_{ij}(t)}{h_{ii}(t)} \hat p_{j}(t) = \hat p_i(t) - p_i(t)  =
\mathbf{0}.
\end{array}
\end{equation}

Thus, we can know from \eqref{kj} that 
the condition \eqref{get} is equivalent to $\hat p_i(t) \!=\! p_i(t) \!=\! p^*_i(t), i \! \in \! \mathcal{V}_f$. That is, each follower $i \! \in \! \mathcal{V}_f$ can know whether it has arrived at its target position by \eqref{get}. If the third subequation of \eqref{get} does not hold, follower $i$ 
cannot achieve self-localization or know whether it has arrived at its target position. 
In the multi-layer $d\!+\!1$-rooted graphs, the followers will sequentially make their own conditions \eqref{get} hold under the proposed controllers shown in Section \ref{single}, i.e., the followers will sequentially reach their target positions over the multi-layer $d\!+\!1$-rooted graphs.
\end{remark}

From the above analysis, each leader or follower can know whether 
it has arrived at its target position. Then, 
each follower $i \! \in \! \mathcal{V}_f$ can know which mode it belongs to according to Definition \ref{ma1} and Definition \ref{ma2} by communicating with its neighbors $j \! \in \! \mathcal{N}_i$.

\section{Control Design for the Agents}\label{single}

\subsection{Control Design for the Leaders}\label{sin1}

We first define a continuous function $\text{sig}(\cdot)$ as
\begin{equation}
  \text{sig}^{l}(x) \!=\! [\text{sgn}(x_1)|x_1|^{l}, \cdots, \text{sgn}(x_d)|x_d|^{l}]^T, 
\end{equation}
where $x \!=\! [x_1, \cdots, x_d]^T \! \in \! \mathbb{R}^d, l\!>\!0$, and $\text{sgn}(\cdot)$ is the signum function defined component-wise. Then, it has
\begin{equation}
\begin{array}{ll}
     &    x^T \text{sig}^{l}(x)=  |x_1|^{l\!+\!1}+ \cdots +
    |x_d|^{l\!+\!1} \\
     & \ \ \ \ \ \ \ \ \ \ \ \  = (x_1^2)^{\frac{l\!+\!1}{2}}+ \cdots +
    (x_d^2)^{\frac{l\!+\!1}{2}}.
\end{array}
\end{equation}

If $0<l<1$ \cite{hardy1952inequalities}, it has
\begin{equation}\label{base1}
\begin{array}{ll}
     &    x^T \text{sig}^{l}(x)=  (x_1^2)^{\frac{l\!+\!1}{2}}+ \cdots +
    (x_d^2)^{\frac{l\!+\!1}{2}} \\
     & \ \ \ \ \ \ \ \ \ \ \ \  \ge  (x_1^2+\cdots+x_d^2)^{\frac{l\!+\!1}{2}} = \|x\|_2^{l\!+\!1}.
\end{array}
\end{equation}

Let $\tilde e_i=p_i-p^*_i$ be the tracking error of leader $i$.
To achieve its control objective \eqref{con1l},  the control protocol of leader $i \in \mathcal{V}_l$ is given by
\begin{equation}\label{controlear}
    u_i = - a_1 \tilde e_i -a_2\text{sig}^{a_3}(\tilde e_i) + \dot p^*_i, 
\end{equation}
where $a_1, a_2>0, 0<a_3<1$ are positive control gains.

\begin{lemma}\label{leaderlemma1}
Under Assumption \ref{csr}, the leaders  will drive their tracking errors to zero in finite time under controller \eqref{controlear}.
\end{lemma}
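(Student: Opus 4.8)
The plan is to exploit the fact that, under controller \eqref{controlear}, the closed-loop error dynamics of each leader decouples completely from the other agents, and then to invoke a finite-time Lyapunov argument built directly on the inequality \eqref{base1}. First I would substitute \eqref{controlear} into the leader dynamics \eqref{singlem}. Since $u_i$ depends only on $\tilde e_i = p_i - p^*_i$ and on $\dot p^*_i$ (both of which are available to leader $i$ because the leaders know their own positions and the maneuver parameters), differentiating $\tilde e_i$ gives the autonomous error system
\begin{equation}
\dot{\tilde e}_i = \dot p_i - \dot p^*_i = -a_1 \tilde e_i - a_2 \text{sig}^{a_3}(\tilde e_i), \quad i \in \mathcal{V}_l,
\end{equation}
which is decoupled across $i$, so it suffices to analyze a single leader.

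Next I would take the quadratic Lyapunov candidate $V_i = \tfrac{1}{2}\|\tilde e_i\|_2^2$ and differentiate it along the trajectories of the error system, obtaining $\dot V_i = -a_1\|\tilde e_i\|_2^2 - a_2\, \tilde e_i^T \text{sig}^{a_3}(\tilde e_i)$. The crucial step is to lower-bound the second term using \eqref{base1} with $l = a_3 \in (0,1)$, which yields $\tilde e_i^T \text{sig}^{a_3}(\tilde e_i) \ge \|\tilde e_i\|_2^{a_3+1}$. Writing $\|\tilde e_i\|_2^2 = 2V_i$ and $\|\tilde e_i\|_2^{a_3+1} = (2V_i)^{(a_3+1)/2}$, this produces the scalar differential inequality
\begin{equation}
\dot V_i \le -2a_1 V_i - a_2\, 2^{(a_3+1)/2}\, V_i^{(a_3+1)/2}.
\end{equation}

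Finally, since $0 < a_3 < 1$ gives $\tfrac{1}{2} < \tfrac{a_3+1}{2} < 1$, the fractional-power term alone guarantees finite-time convergence: discarding the nonpositive linear term (which only accelerates convergence) and applying the comparison lemma to $\dot V_i \le -c\, V_i^{\alpha}$ with $c = a_2 2^{(a_3+1)/2}$ and $\alpha = (a_3+1)/2 \in (0,1)$ shows that $V_i(t) \equiv 0$ for all $t \ge T_i$, with settling time bounded by $T_i \le V_i(0)^{1-\alpha}/(c(1-\alpha))$. Hence each $\tilde e_i$, and therefore every leader's tracking error, is driven exactly to zero within the finite time $T_l = \max_{i \in \mathcal{V}_l} T_i$. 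I expect the only delicate point to be the non-Lipschitz behaviour of $\text{sig}^{a_3}(\cdot)$ at the origin, which precludes a classical exponential-stability argument and forces reliance on the finite-time Lyapunov/comparison framework; this non-smoothness is in fact precisely why the term $-a_2\,\text{sig}^{a_3}(\tilde e_i)$ is built into the controller, and I would cite the standard finite-time stability result (together with \eqref{base1}) to close the argument rather than re-derive the settling-time bound from scratch.
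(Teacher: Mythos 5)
Your proof is correct and takes essentially the same route as the paper: substitute \eqref{controlear} to get the decoupled error dynamics, use a quadratic Lyapunov function, apply \eqref{base1} to obtain $\dot V \le -c\,V^{(a_3+1)/2}$, and invoke a standard finite-time stability/comparison result. The only cosmetic difference is that the paper aggregates all leaders into a single Lyapunov function $V_1=\sum_{i=1}^{m}\tilde e_i^T\tilde e_i$ (using subadditivity of $x\mapsto x^{(a_3+1)/2}$ across agents), whereas you analyze each leader separately and take the maximum settling time, which is equivalent given the decoupling.
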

\begin{proof}
Consider a Lyapunov function candidate $V_1 \!=\! \sum\limits_{i=1}^{m}{\tilde e}^T_i{\tilde e}_i$. It has
\begin{equation}
\begin{array}{ll}
\begin{array}{ll}
     & \dot V_1 \!=\! -2a_1\sum\limits_{i=1}^{m}{\tilde e}^T_i{\tilde e}_i \!-\! 2a_2\sum\limits_{i=1}^{m}\tilde e_i^T\text{sig}^{a_3}(\tilde e_i) \\
     & \ \ \ \ \!\le\! -2a_2\sum\limits_{i=1}^{m}\|\tilde e_i\|_2^{a_3\!+\!1} \! \le \! -2a_2V_1^{\frac{a_3\!+\!1}{2}}.
\end{array}
\end{array}
\end{equation}

From Lemma $1$ of \cite{hong2001output}, the tracking errors of the leaders will converge to zero in finite time, i.e., there exists a finite time $T, 0<T<\infty$ such that $\tilde e_i(t)=0, t\ge T, i \in \mathcal{V}_l$, where $T$ is determined by the initial tracking errors $\sum\limits_{i=1}^{m}{\tilde e}^T_i(0){\tilde e}_i(0)$.
\end{proof}

\subsection{Control Design for the Followers under Maintaining Mode}\label{smtm}

From Section \ref{mmm}, if follower $i$ is in the maintaining mode, it may not be localized by its neighbors. 
Hence, 
it should keep its formation error bounded before switching to the maneuvering mode. To keep its formation error bounded, 
the controller of follower $i \in \mathcal{V}_f$  is designed as
\begin{equation}\label{int1}
\begin{array}{ll}
    &  \left\{ \! \begin{array}{lll} 
    u_i \!=\! - a_1\hat p_i \!+\! \sum\limits_{j \in \mathcal{N}_i} \frac{w_{ij}}{w_{ii}}(a_1{\hat p}_{j} \!+\!  {\hat v}_{j}), \\
      \dot {\hat p}_i \!=\! - 2a_1\hat p_i \!+\! \sum\limits_{j \in \mathcal{N}_i} \frac{w_{ij}}{w_{ii}}(2a_1{\hat p}_{j} \!+\! {\hat v}_{j}),
    \end{array}\right. 
\end{array} 
\end{equation}
where $a_1 >0$ is a positive control gain, and ${\hat v}_{j}  \!=\! \dot {\hat p}_j$ is the estimated velocity.

\begin{theorem}\label{mtml}  Follower $i \in \mathcal{V}_f$ under the maintaining mode will keep its formation error bounded under controller \eqref{int1}.
\end{theorem}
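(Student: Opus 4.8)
The plan is to derive the closed-loop dynamics of the two error signals $\bar e_i$ and $\hat e_i$ defined in \eqref{tra1} and \eqref{tra2}, and to show that these dynamics form a self-contained linear system that does \emph{not} depend on the (possibly moving or non-localizable) neighbors. First I would abbreviate the neighbor aggregate by $s_i := \sum_{j \in \mathcal{N}_i} \tfrac{w_{ij}}{w_{ii}} \hat p_j$, so that $\bar e_i = p_i - s_i$ and, since $\hat v_j = \dot{\hat p}_j$, the feedforward velocity term satisfies $\sum_{j \in \mathcal{N}_i} \tfrac{w_{ij}}{w_{ii}} \hat v_j = \dot s_i$. With this notation the controller \eqref{int1} reads $u_i = -a_1 \hat p_i + a_1 s_i + \dot s_i$ and $\dot{\hat p}_i = -2a_1 \hat p_i + 2a_1 s_i + \dot s_i$. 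Using $\hat p_i = p_i + \hat e_i$ together with $p_i - s_i = \bar e_i$, I would rewrite these (recall $\dot p_i = u_i$) as $\dot p_i = -a_1 \bar e_i - a_1 \hat e_i + \dot s_i$ and $\dot{\hat p}_i = -2a_1(\bar e_i + \hat e_i) + \dot s_i$.

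The key step is then to differentiate the two error signals. Since $\dot{\bar e}_i = \dot p_i - \dot s_i$ and $\dot{\hat e}_i = \dot{\hat p}_i - \dot p_i$, the term $\dot s_i$ cancels in both derivatives, leaving the autonomous system
\[
\dot{\bar e}_i = -a_1(\bar e_i + \hat e_i), \qquad \dot{\hat e}_i = -a_1(\bar e_i + \hat e_i).
\]
This cancellation, which is precisely the reason the controller is designed as in \eqref{int1}, decouples follower $i$'s error dynamics entirely from its neighbors' states and velocities, so that boundedness will hold regardless of whether the neighbors are localizable or in motion.

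Finally, I would analyze this simple linear system. Setting $\xi_i := \bar e_i + \hat e_i$ yields $\dot \xi_i = -2a_1 \xi_i$, hence $\xi_i(t) = \xi_i(0)\,e^{-2a_1 t}$ decays exponentially and is bounded. Moreover, since $\dot{\bar e}_i = \dot{\hat e}_i$, the difference $\bar e_i - \hat e_i$ is invariant in time and therefore equals its (finite) initial value. Recovering $\bar e_i = \tfrac12[\xi_i + (\bar e_i - \hat e_i)]$ and $\hat e_i = \tfrac12[\xi_i - (\bar e_i - \hat e_i)]$ then shows that both the tracking error and the estimation error of follower $i$ remain bounded for all $t$, which is the claim.

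The main obstacle here is conceptual rather than computational: one must recognize that the neighbor-velocity feedforward $\sum_{j \in \mathcal{N}_i} \tfrac{w_{ij}}{w_{ii}} \hat v_j$ is exactly $\dot s_i$ and that it cancels identically in both error derivatives. Once this observation is made, no assumption on the boundedness of the neighbors' trajectories is needed, and boundedness follows immediately from the elementary linear ODE above.
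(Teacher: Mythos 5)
Your proposal is correct, and its core — recognizing that the neighbor feedforward $\sum_{j \in \mathcal{N}_i} \tfrac{w_{ij}}{w_{ii}} \hat v_j$ equals $\dot s_i$ and cancels in both error derivatives, leaving the self-contained linear dynamics $\dot{\bar e}_i = \dot{\hat e}_i = -a_1(\bar e_i + \hat e_i)$ — is exactly the paper's key step, which it writes compactly as $\dot e_i = -D_a e_i$ with $e_i = (\bar e_i^T, \hat e_i^T)^T$ and $D_a = \left[\begin{smallmatrix} a_1 & a_1 \\ a_1 & a_1 \end{smallmatrix}\right] \otimes I_d$. Where you differ is the finish: the paper takes the Lyapunov route, using $V_2 = \tfrac{1}{2}e_i^T e_i$ and the positive semidefiniteness of $D_a$ to get $\dot V_2 = -e_i^T D_a e_i \le 0$, hence $\|e_i(t)\|_2 \le \|e_i(0)\|_2$; you instead solve the system in closed form via the sum/difference coordinates $\xi_i = \bar e_i + \hat e_i$ and $\bar e_i - \hat e_i$. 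Both arguments are rigorous and yield the same uniform bound (your explicit solution also gives $\|e_i(t)\|_2 \le \|e_i(0)\|_2$ after a short computation), but your decomposition exposes additional structure: the mode $\bar e_i - \hat e_i$ lies in $\ker D_a$ and is frozen at its initial value, while the orthogonal mode decays exponentially. This makes transparent why the maintaining mode can guarantee only boundedness and never convergence, and hence why the maneuvering-mode controller \eqref{int2}, whose error matrix $D_b$ is positive definite, is needed to drive $e_i$ to zero — a point the paper's semidefinite Lyapunov argument leaves implicit.
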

\begin{proof}
Under controller \eqref{int1}, it has 
\begin{equation}\label{mtm1}
    \dot e_i \!=\! -D_ae_i, \ i \!=\! m\!+\!1, \cdots, n, 
\end{equation}
where  $e_i=(\bar e_i^T, \hat e_i^T)^T$ is the formation error of follower $i$ and $D_a \!=\!   \left[\begin{array}{ll}
   a_1 & a_1  \\
    a_1 & a_1
    \end{array}\right] \! \otimes \! I_d. $ The formation error $e_i$ of follower $i$ consists of the tracking error $\bar e_i$ \eqref{tra1} and position estimation error $\hat e_i$ \eqref{tra2}. 
Note that the matrix $D_a$ is positive semidefinite. Consider a Lyapunov function candidate $V_2=\frac{1}{2}e^T_ie_i$. It has $ \dot{{V}}_2 \!=\! -e_i^TD_ae_i \! \le  \! 0.$
Hence, we obtain $\|e_i(t)\|_2 \! \le \!  \|e_i(0)\|_2, t \! \ge \! 0$, i.e., the formation error $e_i(t)$ of follower $i$ will be no more than its initial formation error $\|e_i(0)\|_2$.

\end{proof}

\subsection{Control Design for the Followers under Maneuvering Mode}\label{smvm}

From Section \ref{mmm}, if follower $i$ is in the maneuvering mode, it must be localized by its neighbors. Hence, 
if follower $i$ is in the maneuvering mode, it should achieve self-localization 
and get to its target position. To achieve self-localization 
and get to its target position,
the control protocol of follower $i \in \mathcal{V}_f$ is given by
\begin{equation}\label{int2}
\begin{array}{ll}
    &  \left\{ \! \begin{array}{lll} 
    u_i \!=\!  \eta_i \!+\! \sum\limits_{j \in \mathcal{N}_i} \frac{w_{ij}}{w_{ii}}  {\hat v}_{j} \!+\! \text{sig}^{a_3}(\eta_i ), \\
      \dot {\hat p}_i \!=\! 2\eta_i \!+\! \sigma_i \!+\!\sum\limits_{j \in \mathcal{N}_i} \frac{w_{ij}}{w_{ii}}  {\hat v}_{j}  \!+\! \text{sig}^{a_3}(\eta_i )  \!+\! \text{sig}^{a_3}(\eta_i \!+\! \sigma_i), \\
      \eta_i = - a_2\hat p_i \!+\! a_2\sum\limits_{j \in \mathcal{N}_i} \frac{w_{ij}}{w_{ii}}{\hat p}_{j}, \\
      \sigma_i = -a_4\hat p_i\!+\!a_4\sum\limits_{j \in \mathcal{N}_i} \frac{h_{ij}}{h_{ii}}\hat p_{j},
    \end{array}\right. 
\end{array} 
\end{equation}
where  $a_2, a_4 > 0, 0 < a_3 <1$ are the positive control gains.

\begin{theorem}\label{mvml}
Under Assumption \ref{csr},  follower $i \in \mathcal{V}_f$ under the maneuvering mode will drive its formation error to zero in finite time under controller \eqref{int2}.
\end{theorem}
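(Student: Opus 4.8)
The plan is to derive the closed-loop dynamics of the stacked error $e_i = (\bar e_i^T, \hat e_i^T)^T$ in the maneuvering mode, show that it collapses to the form $\dot e_i = \zeta_i + \text{sig}^{a_3}(\zeta_i)$ with $\zeta_i = -(M_0\otimes I_d)e_i$ for a symmetric positive-definite matrix $M_0$, and then conclude finite-time convergence via a weighted Lyapunov argument that mirrors the proof of Lemma \ref{leaderlemma1}.

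First I would exploit Definition \ref{ma1}: in the maneuvering mode every neighbor $j\in\mathcal{N}_i$ has reached its target, so $\hat p_j = p_j = p^*_j$. From \eqref{we} the nominal weights satisfy $\sum_{j\in\mathcal{N}_i}\frac{w_{ij}}{w_{ii}}=1$ and $\sum_{j\in\mathcal{N}_i}\frac{w_{ij}}{w_{ii}}r_j=r_i$, so substituting \eqref{ti} yields $\sum_{j\in\mathcal{N}_i}\frac{w_{ij}}{w_{ii}}p^*_j = p^*_i$; hence the tracking error \eqref{tra1} collapses to $\bar e_i = p_i - p^*_i$, and the feedback term simplifies to $\eta_i = -a_2(\bar e_i + \hat e_i)$. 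For $\sigma_i$, Assumption \ref{csr} together with Lemma \ref{ls2} guarantees $h_{ii}(t)\neq 0$ once the neighbors are at their target positions, so the real-time displacement constraint \eqref{we2} is available; using $\hat p_j = p_j$ and $\sum_{j\in\mathcal{N}_i}\frac{h_{ij}}{h_{ii}}p_j = p_i$ gives $\sigma_i = -a_4\hat e_i$.

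Next I would differentiate \eqref{tra1} and \eqref{tra2} along controller \eqref{int2}. In $\dot{\bar e}_i = u_i - \sum_{j\in\mathcal{N}_i}\frac{w_{ij}}{w_{ii}}\hat v_j$ the velocity feed-forward cancels, leaving $\dot{\bar e}_i = \eta_i + \text{sig}^{a_3}(\eta_i)$; likewise $\dot{\hat e}_i = \dot{\hat p}_i - u_i = (\eta_i+\sigma_i)+\text{sig}^{a_3}(\eta_i+\sigma_i)$. Stacking $\zeta_i = (\eta_i^T,(\eta_i+\sigma_i)^T)^T$ and inserting the expressions above gives $\zeta_i = -(M_0\otimes I_d)e_i$ with $M_0 = \left[\begin{smallmatrix} a_2 & a_2 \\ a_2 & a_2+a_4\end{smallmatrix}\right]$; since $a_2>0$ and $\det M_0 = a_2 a_4>0$, the matrix $M_0$ is symmetric positive definite, and because $\text{sig}^{a_3}(\cdot)$ acts component-wise the closed loop reads $\dot e_i = \zeta_i + \text{sig}^{a_3}(\zeta_i)$.

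Finally I would take the weighted Lyapunov candidate $V = \frac{1}{2}e_i^T(M_0\otimes I_d)e_i$. Using $(M_0\otimes I_d)e_i = -\zeta_i$ and the symmetry of $M_0$, one obtains $\dot V = -\zeta_i^T\zeta_i - \zeta_i^T\text{sig}^{a_3}(\zeta_i) \le -\|\zeta_i\|_2^{a_3+1}$ by \eqref{base1}. The bound $\|\zeta_i\|_2^2 \ge 2\lambda_{\min}(M_0)V$ then converts this to $\dot V \le -c\,V^{(a_3+1)/2}$ for some $c>0$ with exponent $(a_3+1)/2\in(0,1)$, so Lemma 1 of \cite{hong2001output} yields finite-time convergence of $V$, and hence of $e_i$, to zero. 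I expect the main obstacle to be the first reduction step: proving that the two nonlinear feedback terms $\eta_i$ and $\sigma_i$ simplify \emph{exactly} to $-a_2(\bar e_i+\hat e_i)$ and $-a_4\hat e_i$. This rests on the nominal-weight identity and, crucially, on $h_{ii}\neq0$ so that \eqref{we2} holds; everything downstream then hinges on $M_0$ being positive definite, which is precisely what makes the weighted Lyapunov derivative telescope into a clean $-\|\zeta_i\|_2^{a_3+1}$ term.
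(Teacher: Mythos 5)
Your proposal is correct and follows essentially the same route as the paper: your matrix $M_0$ is exactly the paper's $D_b$, your closed-loop equation $\dot e_i = \zeta_i + \text{sig}^{a_3}(\zeta_i)$ with $\zeta_i = -(M_0\otimes I_d)e_i$ is the paper's \eqref{int4}, and your weighted Lyapunov function, the bound $\dot V \le -cV^{(a_3+1)/2}$, and the appeal to Lemma 1 of \cite{hong2001output} mirror the paper's argument (your intermediate bound $\|\zeta_i\|_2^2 \ge 2\lambda_{\min}(M_0)V$ via $M_0^2 \succeq \lambda_{\min}(M_0)M_0$ is even a touch cleaner than the paper's chain through $\|e_i\|_2^2$). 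The main added value of your write-up is that you explicitly derive the reductions $\eta_i = -a_2(\bar e_i + \hat e_i)$ and $\sigma_i = -a_4\hat e_i$ from the maneuvering-mode conditions and $h_{ii}\neq 0$, a step the paper asserts without proof.
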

\begin{proof}
Let $e_i=(\bar e_i^T, \hat e_i^T)^T$ be the formation error of follower $i \! \in \! \mathcal{V}_f$, where $\bar e_i, \hat e_i$ are given in \eqref{tra1} and \eqref{tra2}. Under controller \eqref{int2}, it has
\begin{equation}\label{int4}
     \dot e_i \!=\! -D_be_i - \text{sig}^{a_3}(D_be_i), \ i \!=\! m\!+\!1, \cdots, n, 
\end{equation}
where $0
\!<\! \frac{a_3\!+\!1}{2} \!<\! 1$ and $D_b \!=\!  \left[\begin{array}{ll}
   a_2 & a_2  \\
    a_2 & a_2\!+\!a_4
    \end{array}\right] \! \otimes \! I_d$. Note that the matrix $D_b$ is positive definite.
Consider a Lyapunov function candidate $V_3=\frac{1}{2}e^T_iD_be_i$.
Based on \eqref{base1} and \eqref{int4}, it has
\begin{equation}\label{int5}
\begin{array}{ll}
     & \dot{{V}}_3 = -e_i^TD_b^2e_i - (D_be_i)^T\text{sig}^{a_3}(D_be_i) \le -\|D_be_i\|_2^{a_3\!+\!1}.
\end{array}
\end{equation}

Note that $V_3 \le \frac{1}{2}\lambda_{\max}(D_b)\|e_i\|_2^2$. Then, \eqref{int5} becomes
\begin{equation}\label{int6}
\begin{array}{ll}
    & \dot{{V}}_3  \le -\|D_be_i\|_2^{a_3\!+\!1}=- (e_i^TD_b^2e_i)^{\frac{a_3\!+\!1}{2}} \\
     & \ \ \ \ \le -(e_i^T\lambda^2_{\min}(D_b)e_i)^{\frac{a_3\!+\!1}{2}} = -\lambda^{a_3\!+\!1}_{\min}(D_b)(\|e_i\|_2^{2})^{\frac{a_3\!+\!1}{2}} \\ 
      & \ \ \ \ \le  -(\frac{2\lambda^{2}_{\min}(D_b)}{\lambda_{\max}(D_b)})^{\frac{a_3\!+\!1}{2}} V_3^{\frac{a_3\!+\!1}{2}}.
\end{array}
\end{equation}

From Lemma $1$ of \cite{hong2001output}, the formation error $e_i(t)$ of follower $i$ will converge to zero in finite time, i.e., there exists a finite time $T, 0<T<\infty$ such that $e_i(t)=0, t\ge T$.
\end{proof}

\begin{figure}[t]
\centering
\includegraphics[width=1\linewidth]{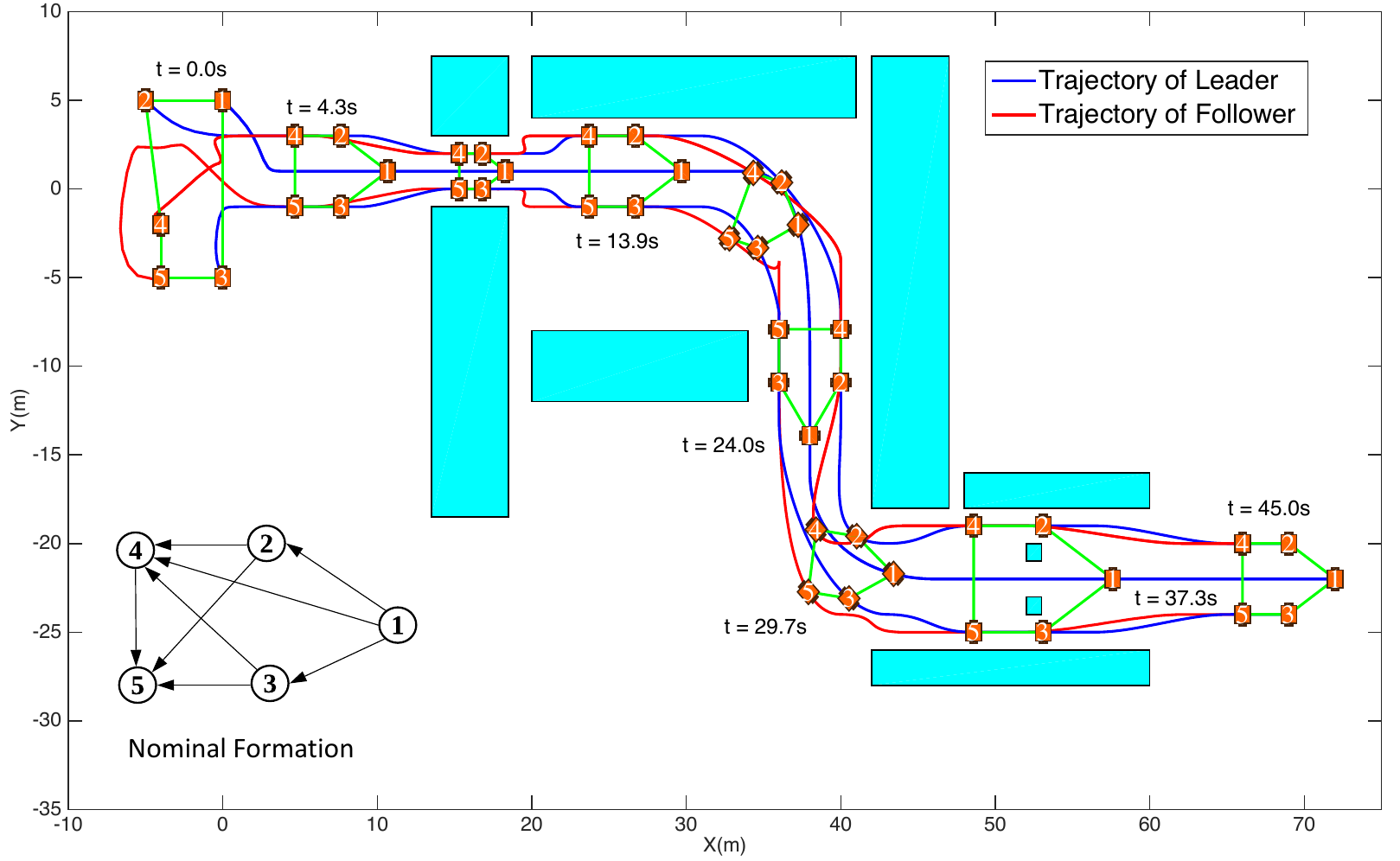}
\caption{Trajectories of multi-agent system}
\label{simu1}
\end{figure}

From Theorem \ref{mtml} and Theorem \ref{mvml}, we have the following conclusion.

\begin{theorem}\label{them1}
Under Assumption \ref{csr},
the agents over a multi-layer $d \!+\! 1$-rooted graph will converge to their target positions globally in finite time under the controllers \eqref{controlear}, \eqref{int1}, and \eqref{int2}.
\end{theorem}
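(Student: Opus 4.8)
The plan is to run an induction along the hierarchical ordering supplied by the multi-layer $d\!+\!1$-rooted graph, using the three preceding results as the per-agent building blocks. The structural fact I would exploit from Definition~\ref{mul} is that every follower $i$ has all of its neighbors $\mathcal{N}_i$ strictly lower in the index (equivalently layer) ordering; this is exactly the lower-triangularity of $\Omega_{f\!f}$ displayed in \eqref{up2}, and it lets me treat the network as a cascade, settling leaders first and then followers in increasing index.

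For the base case I would invoke Lemma~\ref{leaderlemma1}: the leaders drive $\tilde e_i$ to zero in a finite time $T_0$, and because the feedforward term $\dot p^*_i$ in \eqref{controlear} makes $\tilde e_i\equiv\mathbf{0}$ invariant, every leader satisfies $\hat p_i(t)=p_i(t)=p^*_i(t)$ for all $t\ge T_0$. For the inductive step I would assume that agents $1,\dots,i\!-\!1$ are all settled on their (time-varying) targets with correct estimates for $t\ge T_{i-1}$. Since $\mathcal{N}_i\subseteq\{1,\dots,i\!-\!1\}$, every neighbor of follower $i$ is then settled, so by Definition~\ref{ma1} follower $i$ is in the maneuvering mode on $[T_{i-1},\infty)$; Theorem~\ref{mvml} gives the autonomous error dynamics \eqref{int4} and drives $e_i$ to zero in finite time, yielding $T_i>T_{i-1}$ with $e_i(t)=\mathbf{0}$ for $t\ge T_i$. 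The right-hand side of \eqref{int4} vanishes at $e_i=\mathbf{0}$, so follower $i$ remains settled thereafter. Before $T_{i-1}$ follower $i$ sits in the maintaining mode whenever some neighbor has not yet arrived, and Theorem~\ref{mtml} guarantees $\|e_i(t)\|_2\le\|e_i(0)\|_2$, so $e_i(T_{i-1})$ is finite and is a legitimate initial condition for the maneuvering-mode phase. Iterating over $i=m\!+\!1,\dots,n$ produces finite settling times $T_0\le T_{m+1}\le\cdots\le T_n$, and since there are finitely many agents, all of them reach their targets for $t\ge T_n<\infty$. Globality is inherited because the Lyapunov estimates behind Lemma~\ref{leaderlemma1} and Theorems~\ref{mtml}--\ref{mvml} hold for arbitrary initial errors.

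The hard part will be justifying the \emph{monotonicity and persistence} that makes this cascade well posed. I must argue that any agent, once it reaches its target, never leaves it, so that the number of settled neighbors of follower $i$ is nondecreasing in time and follower $i$ therefore performs the switch maintaining$\,\to\,$maneuvering exactly once, with no chattering back to the maintaining mode. This rests on the invariance of $e_i=\mathbf{0}$ under \eqref{int4} and under the settled leader dynamics, combined with the strict lower-triangularity from Definition~\ref{mul}. Two further points need care: the maintaining phase must not let any error escape to infinity in finite time, which is precisely what the boundedness in Theorem~\ref{mtml} rules out; and the velocity feedforward $\hat v_j=\dot{\hat p}_j$ must reproduce the moving target velocity $\dot p^*_j$ once neighbor $j$ is settled, so that the decoupled error form \eqref{int4} used by Theorem~\ref{mvml} is genuinely valid along the cascade.
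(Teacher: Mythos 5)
Your proposal is correct and follows essentially the same route as the paper's proof: leaders settle first via Lemma~\ref{leaderlemma1}, followers stay bounded in maintaining mode (Theorem~\ref{mtml}) until all neighbors are settled, then converge in finite time in maneuvering mode (Theorem~\ref{mvml}), cascading through the lower-triangular ordering of the multi-layer $d\!+\!1$-rooted graph and summing finitely many finite settling times. Your explicit attention to the invariance of the settled state (no switching back to maintaining mode) and to the consistency of the velocity feedforward $\hat v_j$ with $\dot p^*_j$ makes rigorous the steps the paper compresses into ``by using the similar argument,'' but it is the same induction, not a different proof.
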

\begin{proof}

For any given initial positions of the leader group $\mathcal{V}_l=\{1, \cdots, m \}$ and follower group $\mathcal{V}_f\!=\! \{m\!+\!1, \cdots, n\}$,
from Lemma \ref{leaderlemma1} and under controller \eqref{controlear}, the leaders will converge to their target positions in finite time $T_1>0$, i.e.,
\begin{equation}
 \|\tilde e_i(t)\|_2 \!=\! 0, \ i \!=\! 1, \cdots, m,  \ t \ge T_1.    
\end{equation}

During the time interval $[0, T_1)$, we can know from Definition  \ref{ma2} that all followers will be in the maintaining modes. From Theorem \ref{mtml}, all followers will keep their formation errors bounded, i.e.,
\begin{equation}
\begin{array}{ll}
     \|e_i(t)\|_2 \le \|e_i(0)\|_2, \ i \!=\! m\!+\!1, \cdots, n,  \ 0 \le t  < T_1. 
\end{array} 
\end{equation}

As shown in \eqref{up2}, based on a multi-layer $d \!+\! 1$-rooted graph \eqref{neighborset}, 
the first follower ${m\!+\!1}$ has
$d\!+\!1$ neighboring leaders. Then, the first follower ${m\!+\!1}$ will switch to the maneuvering mode at time instant $T_1$ because its $d\!+\!1$ neighboring leaders get to their target positions at time instant $T_1$. From Theorem \ref{mvml}, the first follower ${m\!+\!1}$ will 
drive its formation error to zero in finite time $T_2 \! \ge \! 0$, i.e., the condition \eqref{get} holds and $ e_{m\!+\!1}(t) \!=\!0, t  \ge T_1 \!+\! T_2.$
During the time interval $[T_1, T_1\!+\!T_2)$, the rest followers $i = m\!+\!2, \cdots, n$ will still be in the maintaining modes to keep their formation errors bounded, i.e.,
\begin{equation}
\begin{array}{ll}
     \|e_i(t)\|_2 \le \|e_i(0)\|_2, \ i \!=\! m\!+\!2, \cdots, n,  \ 0 \le t  < T_1 \!+\! T_2. 
\end{array} 
\end{equation}

Then, the rest followers $i \!=\! m\!+\!2, \cdots, n$ will also switch to the maneuvering modes and converge to their target positions sequentially in finite time by using the similar argument. Hence, all agents will converge to their target positions globally in finite time $T=\sum\limits_{i=1}^{n\!-\!m\!+\!1}T_i$ under controllers \eqref{int1} and \eqref{int2}.

\end{proof}

\begin{remark}
Note that the work in \cite{mehdifar20222} is applicable to unaligned local coordinate frames. 
One possible way to extend the proposed method to unaligned local coordinate frames is to design an orientation estimation protocol for each follower to estimate the orientation of its local coordinate frame by combining other types of measurements such as relative orientation measurements \cite{li2019globally}.
In theorem \ref{them1}, only the first leader $p_{m\!+\!1}$ has $d\!+\!1$ leaders as neighbors.
The case that no follower has $d\!+\!1$ leaders as neighbors yet still localizable and the extension of the proposed method are given in Appendix.      
\end{remark}

\begin{figure}[t]
\centering
\includegraphics[width=0.8\linewidth]{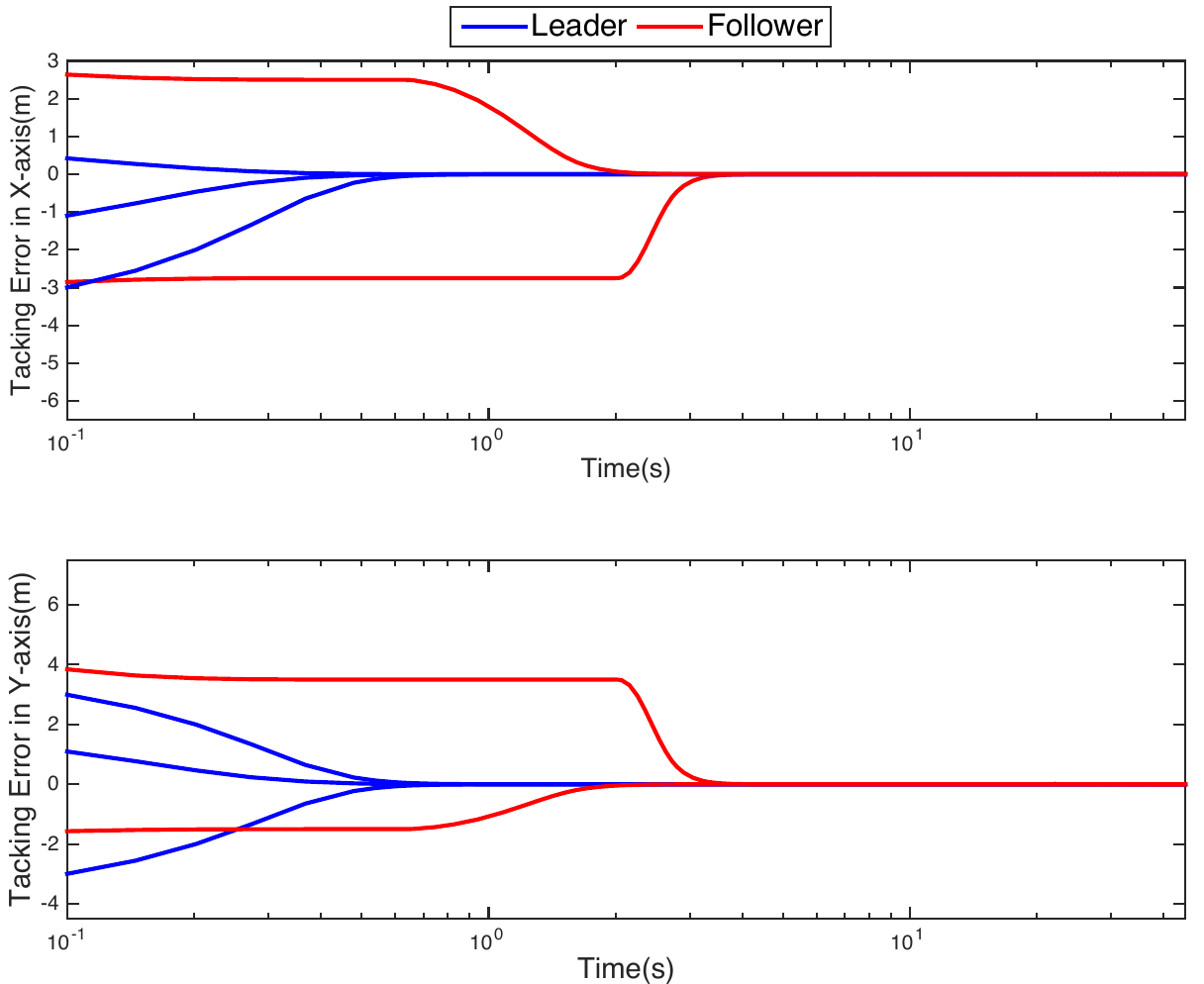}
\caption{Tracking errors of the agents.}
\label{simu2}
\end{figure}

\section{Simulation}\label{simul}

In this section, we present a 2-D formation with three leaders $V_l=\{1,2,3\}$ and two followers $V_f=\{4,5\}$. The leaders $1,2,3$ have access to their positions.
The follower $4$ can measure angle, while the follower $5$ can measure distance.
The localizable nominal configuration $r= [r_1^T ,r_2^T, r_3^T,r_4^T,r_5^T]^T$ in 2-D space is given by
\begin{equation}
\begin{array}{ll}
     & r_1 = [2,1]^T, \ \ r_2 = [-1,3]^T, \ \ r_3 = [-1,-1]^T,  \\
     & r_4 = [-4,3]^T, \ \ r_5 = [-4,-1]^T.
\end{array}
\end{equation}

The matrices $\Omega_{fl}$ and $\Omega_{f\!f}$ are calculated as
\begin{equation}\label{sf}
    \Omega_{fl} = \left[\begin{array}{lll}
    2 & -3 &  -1 \\
    0 & \ \ 1 & -1
    \end{array}\right],  \ \ 
     \Omega_{f\!f} = \left[\begin{array}{ll}
    \ \ 2 & 0 \\
    -1 & 1
    \end{array}\right].
\end{equation}

It is clear that $r_f \! = \! -(\Omega_{f\!f}^{-1}\Omega_{fl}\otimes I_2) r_l$, i.e., the nominal formation is localizable. The integrated distributed localization and formation maneuver control in 2-D space is 
shown in Fig. \ref{simu1}, where the multi-agent system passes through the narrow spaces and avoids obstacles by changing its scale, orientation, and rotation. For example, during the time interval $t \in (6, 10]$, 
the agents pass through the first narrow passage, where
the scaling, rotation, and translation maneuver parameters are designed as $ \beta(t) =\frac{1}{2},   Q(t) =I_2,  \delta(t)= [2t\!+\!1, \frac{1}{2}]^{\top}$. Then, the target configuration $p^*(t)$ in \eqref{ti} becomes
\begin{equation}
\begin{array}{ll}
     &  p^*(t)=\beta(t)[I_5 \otimes Q(t)]r+ {\mathbf{1}}_5 \otimes \delta(t) \\
     & \ \ \ \ \ \ \ \!=\! \frac{1}{2}r+{\mathbf{1}}_5 \otimes \left[\begin{array}{l}
    2t\!+\!1 \\
    \frac{1}{2}
    \end{array}\right], \ t \in (6, 10].  
\end{array}
\end{equation}

\begin{figure}[t]
\centering
\includegraphics[width=0.8\linewidth]{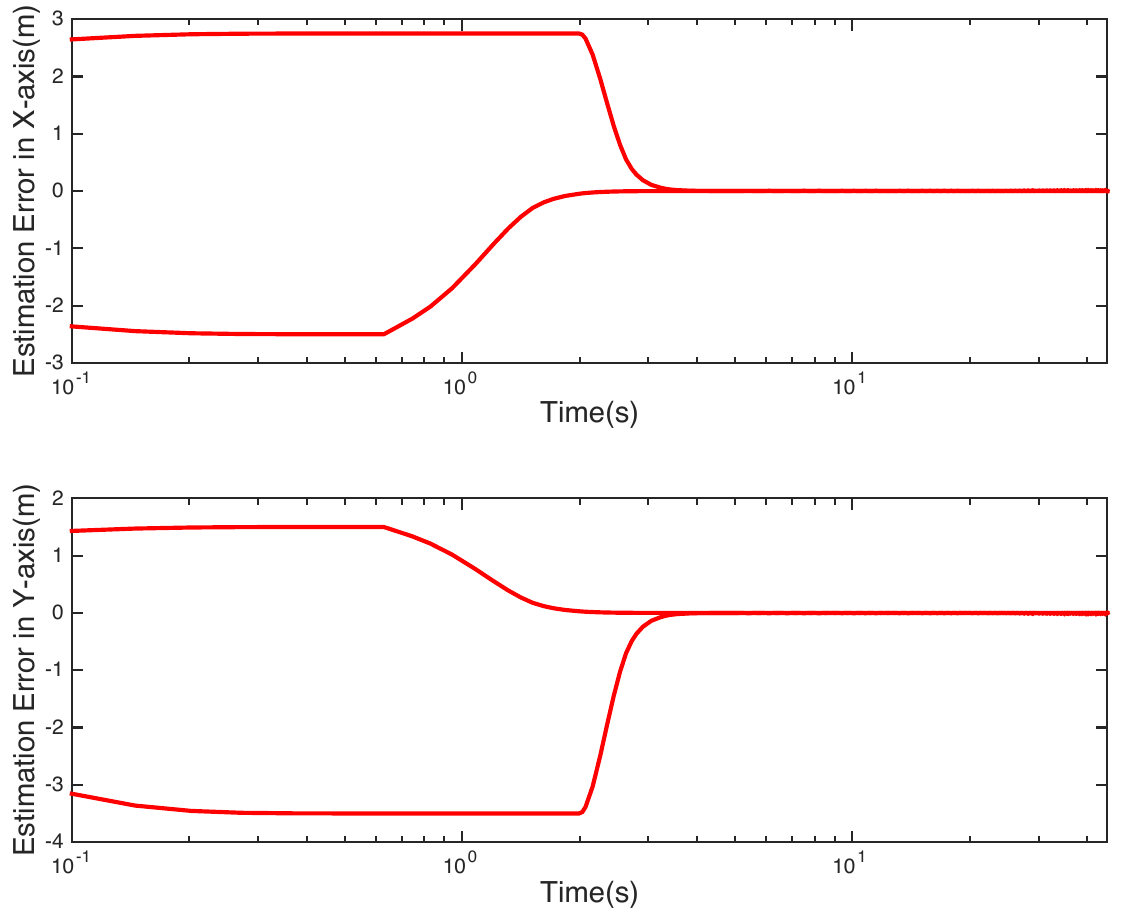}
\caption{Position estimation errors of the follower group.}
\label{simu3}
\end{figure}

The tracking errors and position estimation errors of the agents are given in Fig. \ref{simu2}, where the agents drive their tracking errors or position estimation errors to zero within finite time. It is shown in Fig. \ref{simu4} that the control inputs of the leader group and follower group are continuous.

\section{Conclusion}\label{concl}

This paper solves the integrated relative-measurement-based distributed  localization and formation maneuver control problem in $\mathbb{R}^d (d\ge2)$. The scale, rotation, and translation of the formation can be changed simultaneously by only controlling the positions of the leaders. In addition, the followers, which have no knowledge of their time-varying target positions, are not required to be localizable at all times and will converge to their target  positions globally and  sequentially in finite time. 

\begin{figure}[t]
\centering
\includegraphics[width=0.85\linewidth]{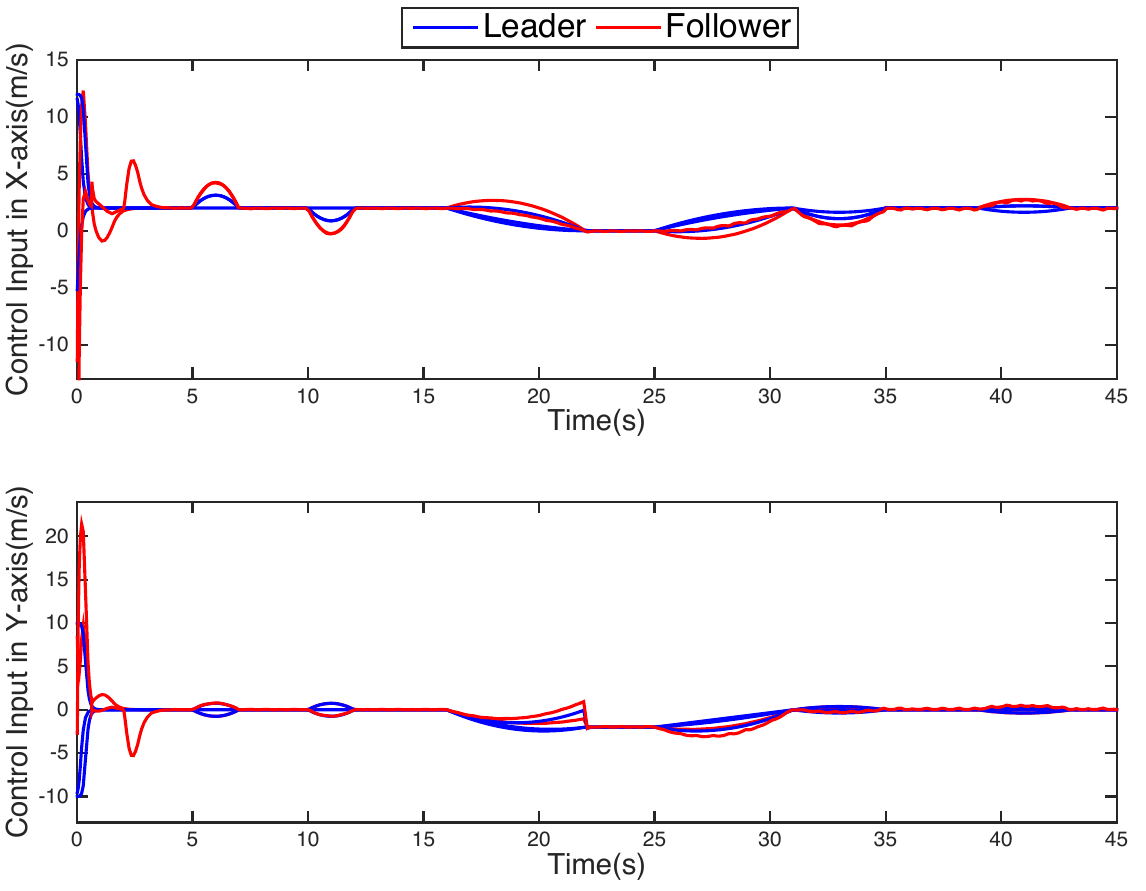}
\caption{Control inputs of the agents.}
\label{simu4}
\end{figure}

\section{Appendix}

Note that the relative positions, bearings, angles, and ratio-of-distances can be obtained by vision technology \cite{chen2011kalman,cao2019ratio,tron2016distributed}, which can then be regarded as communication-free measurements. In addition, an implicit assumption in this paper is that each follower $i$ and its neighbors $\mathcal{N}_i$ in \eqref{neighborset} are within their sensing ranges. For example, if agent $j$ is a neighbor of follower $i$,  the implicit assumption is that follower $i$ and the rest agents in $\mathcal{N}_i$ are within the sensing range of agent $j$. That is, agent $j \in \mathcal{N}_i$ equipped with vision sensors can obtain relative positions, bearings, angles, or ratio-of-distances with follower $i$ and the rest agents in $\mathcal{N}_i$ in a communication-free manner \cite{chen2011kalman,cao2019ratio,tron2016distributed}. Agent $j \in \mathcal{N}_i$ can transfer its measured information to follower $i$ through the directed edge $(i,j)$.

If follower $i$ is equipped with distance sensor, its followers can transfer their estimated positions to follower $i$. If the condition \eqref{get} holds, 
the neighbors of follower $i$ reach their target positions and achieve self-localization, then 
the distances among its neighbors $\mathcal{N}_i$ can be calculated based on the estimated positions of its neighbors. Hence, the displacement parameters $h_{ij_0},  \cdots,  h_{ij_d}$ in \eqref{root} can be calculated relative positions,  bearings, distances, angles, ratio-of-distances, or their combination among agent $i$ and its $d\!+\!1$ neighbors ${j_0}, \cdots, {j_{d}}$  over a multi-layer $d\!+\!1$-rooted graph shown in Section \ref{se1}-Section \ref{sef}. If the angles are obtained based on wireless technology, the edges among follower $i$ and its neighbors $\mathcal{N}_i$ should be revised accordingly.

\begin{lemma}
For follower $i$ and its $d\!+\!1$ neighbors ${j_0}, \cdots, j_{d}$, the displacement constraint \eqref{root1} is invariant to translations, rotations, and scalings of the configuration $\check{p}(t)=(p_i^T(t), p^T_{j_0}(t), \cdots, p^T_{j_d}(t))^T$.    
\end{lemma}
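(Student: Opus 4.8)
The plan is to show that applying a common transformation $p_k \mapsto \beta Q p_k + \delta$ (with arbitrary $\beta \in \mathbb{R}\setminus\{0\}$, $Q \in SO(d)$, $\delta \in \mathbb{R}^d$) simultaneously to agent $i$ and all its neighbors $j_0, \dots, j_d$ leaves the displacement constraint \eqref{root1} satisfied by the \emph{same} parameters $h_{ij}(t)$. First I would compute how the relative positions transform. Because the translation $\delta$ is shared by agent $i$ and each neighbor $j$, it cancels in the difference, while scaling and rotation factor through the common matrix $\beta Q$:
\[
e'_{ij} = (\beta Q p_j + \delta) - (\beta Q p_i + \delta) = \beta Q (p_j - p_i) = \beta Q\, e_{ij}.
\]
Thus translation drops out entirely and the effect of rotation and scaling is the single left factor $\beta Q$.

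Next I would argue that the displacement parameters themselves are unchanged, which I can do in two complementary ways. Algebraically, collecting the relative positions into $E_i = [e_{ij_0} | \cdots | e_{ij_d}]$, the transformed matrix is $E'_i = \beta Q E_i$; since $\beta Q$ is nonsingular whenever $\beta \neq 0$, the null spaces coincide, $\ker E'_i = \ker E_i$, so any $h_i$ solving $E_i h_i = \mathbf{0}$ solves $E'_i h_i = \mathbf{0}$. From the measurement viewpoint, the parameters $h_{ij}$ are computed (via the Appendix formulas) from angles and ratio-of-distances, both of which are invariant: the angles satisfy $\theta'_{ijk} = \arccos(g_{ij}^T Q^T Q\, g_{ik}) = \theta_{ijk}$ because $Q^T Q = I_d$, and the ratios satisfy $d'_{ij}/d'_{ik} = (|\beta| d_{ij})/(|\beta| d_{ik}) = d_{ij}/d_{ik}$, so any quantity derived from them is unchanged. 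Combining the two facts gives
\[
\sum_{j \in \mathcal{N}_i} h_{ij}(t)\, e'_{ij}(t) = \beta Q \sum_{j \in \mathcal{N}_i} h_{ij}(t)\, e_{ij}(t) = \beta Q \cdot \mathbf{0} = \mathbf{0},
\]
establishing the invariance; moreover $h_{ii} = \sum_{j} h_{ij}$ is likewise unchanged, so the localizable form \eqref{we2} transforms consistently as $p'_i = \sum_{j \in \mathcal{N}_i} (h_{ij}/h_{ii})\, p'_j$.

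The main obstacle I anticipate is the bearing-based case, since bearings are \emph{not} themselves invariant but transform as $g'_{ij} = \text{sgn}(\beta)\, Q\, g_{ij}$; one must then verify that wherever $h_{ij}$ depends on bearings it does so only through rotation- and scale-invariant combinations (inner products, equivalently angles, and distance ratios), so that the common factor $\text{sgn}(\beta)\,Q$ cancels. The cleanest way to avoid this case-by-case bookkeeping across the five measurement types is to lean on the null-space identity $E'_i = \beta Q E_i$ derived above, which handles all measurement types uniformly and reduces the entire lemma to the invertibility of $\beta Q$.
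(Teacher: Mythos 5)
Your proof is correct and takes essentially the same route as the paper: translations cancel in the relative positions, which then transform by the common factor $\beta Q$, so the identity $\sum_{j} h_{ij}(t)\,\beta Q\, e_{ij}(t) = \beta Q \sum_{j} h_{ij}(t)\, e_{ij}(t) = \mathbf{0}$ gives invariance with the \emph{same} parameters. The extra material you include (the null-space equality $\ker E'_i = \ker E_i$, the measurement-level invariance of angles and distance ratios, and the bearing bookkeeping) is sound but not needed — the paper's proof consists of exactly the one-line factoring argument plus the observation that relative positions are translation-invariant.
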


\begin{proof}
For the displacement constraint 
$\sum\limits_{k=0}^{d}h_{ij_k}(t) e_{ij_k}(t) \!=\! \mathbf{0}$ in \eqref{root1},
we have $\sum\limits_{k=0}^{d} \beta(t) Q(t) h_{ij_k}(t) e_{ij_k}(t) \!=\! \mathbf{0}$, where $\beta(t) \in \mathbb{R}$ is a scaling factor and $Q(t) \in {{SO}(d)}$ is a $d$-dimensional rotation matrix.  Hence, the displacement constraint is invariant to rotations and scalings of the configuration $\check{p}(t)$. Since the relative positions are invariant to translations, the displacement constraint \eqref{root1} is also invariant to translations of the configuration $\check{p}(t)$. 
\end{proof}

\begin{defn}
 Two configurations ${p}=(p^T_{1}, \cdots, p^T_{n})^T$ and ${q}=(q_1^T, \cdots, q^T_{n})^T$ are congruent if $\|p_i\!-\!p_j\|_2=\| q_i\!-\!q_j\|_2$ for any $i,j \in \mathcal{V}$. and similar if $ \|p_i\!-\!p_j\|_2 =  \kappa \|q_i\!-\!q_j\|_2,  \kappa >0$ for any $i,j \in \mathcal{V}$.    
\end{defn}

Next, we will introduce how to calculate displacement parameters $h_{ij_0}(t), \cdots, h_{ij_d}(t)$ in \eqref{root1} by only using  distances, angles,  ratio-of-distances, bearings, relative positions, or their mixture. In addition, we will introduce the relaxed graph condition, continuous velocity controller, and further comparison with the existing works. 

\subsection{Distance-based Displacement Constraint}\label{se1}

If the inter-agent distance measurements are available, we can obtain the distance matrix $M_d=[d_{ij}^2] \in \mathbb{R}^{(d\!+\!2) \times (d\!+\!2)}$ of follower $i$ and its $d\!+\!1$ neighbors ${j_0}, \cdots, {j_{d}}$, where $d_{ij}=\|p_j\!-\!p_i\|_2$ is the distance between agent $j$ and agent $i$ in $\mathbb{R}^d (d\ge 2)$.  For example, if  follower $p_i$ is in 2-D space, i.e., $d=2$, the distance matrix $M_2$ of follower $p_i$ and its three neighbors $p_{j_0}, p_{j_1}, p_{j_2}$ is given by
\begin{equation}\label{distance}
M_2 = 
    \left[ \!
\begin{array}{c c c c }
0 & d_{ij_0}^2   & d_{ij_1}^2 &  d_{ij_2}^2 \\
d_{j_0i}^2 & 0  & d_{j_0j_1}^2 & d_{j_0j_2}^2 \\
d_{j_1i}^2 & d_{j_1j_0}^2 & 0 & d_{j_1j_2}^2  \\
d_{j_2i}^2 & d_{j_2j_0}^2 & d_{j_2j_1}^2 & 0
\end{array}
\right].
\end{equation}

\begin{algorithm}
\caption{Obtaining $\check{q}=(q_i^T, q^T_{j_0}, \cdots, q^T_{j_d})^T$}
\label{disa}
\begin{algorithmic}[1]
\State Calculate the following matrices based on the distance matrix $M_d$. 
\begin{equation}
X=-\frac{1}{2}JM_dJ, \ \ J=I-\frac{1}{d\!+\!2}\mathbf{1}_{d\!+\!2}\mathbf{1}_{d\!+\!2}^T.  
\end{equation}
\State Since matrix $X$ is symmetric, it is diagonalizable, i.e., there exists a unitary matrix $S \!=\!(s_1,s_2,\cdots,s_{d\!+\!2}) \! \in \! \mathbb{R}^{(d\!+\!2) \times (d\!+\!2)}$ such that 
\begin{equation}\label{st}
    X = S \Lambda S^T,
\end{equation}
where  $\Lambda = \text{diag}(\lambda_1,\lambda_2,\cdots,\lambda_{d\!+\!2})$ is a diagonal matrix whose diagonal elements $\lambda_1 \ge \lambda_2 \ge \cdots \ge \lambda_{d\!+\!2}$ are eigenvalues. Let 
$S_*=(s_1, \cdots,s_d)$ and $\Lambda_*=\text{diag}(\lambda_1,\cdots,\lambda_d)$. Then, the congruent configuration $\check{q}=(q_i, q_{j_0}, \cdots, q_{j_d})^T$ 
can be obtained by
\begin{equation}\label{cong}
 \check{q} =  \Lambda_* ^{\frac{1}{2}}S_*^T.  
\end{equation}
\end{algorithmic}
\end{algorithm}

Inspired by the work \cite{han2017barycentric}, for the configuration $\check{p}=(p_i^T, p^T_{j_0}, \cdots, p^T_{j_d})^T$, 
we can obtain its congruent configuration $\check{q}=(q_i^T, q^T_{j_0}, \cdots, q^T_{j_d})^T$ by the following Algorithm \ref{disa} through distance matrix $M_d$. 
We will explain why $\check{q}$ in \eqref{cong} is the congruent configuration of $\check{p}$. Let \begin{equation}\label{md1}
\begin{array}{ll}
     &   z_h= p_h-\frac{1}{d+2}(p_i+\sum\limits_{k=0}^{d}p_{j_k}), \ h=i, j_0, \cdots, j_d, \\
     & Z=(z_i, z_{j_0}, \cdots, z_{j_d}), \\ 
     & b=(z_i^Tz_i, z^T_{j_0}z_{j_0}, \cdots, z^T_{j_d}z_{j_d})^T.
\end{array}
\end{equation}

Note that $z_i+\sum\limits_{k=0}^{d}z_{j_k}=0$. We have
\begin{equation}\label{md}
\begin{array}{ll}
     &     M_d = \mathbf{1}_{d\!+\!2}b^T+ b\mathbf{1}^T_{d\!+\!2}-2Z^T Z, \\
     & ZJ=Z(I-\frac{1}{d\!+\!2}\mathbf{1}_{d\!+\!2}\mathbf{1}_{d\!+\!2}^T)=Z, \\
     & X=-\frac{1}{2}JM_dJ = JZ^T ZJ= Z^T Z.
\end{array}
\end{equation}

From \eqref{md}, we can know that the matrix $X$ is positive semi-definite, i.e., the matrix  
$\Lambda_*$ in \eqref{cong} is also positive semi-definite. 
Note that  $\text{rank}(X) = \text{rank}(Z^T Z) \le \text{rank}(Z) \le d$. Based on \eqref{st} and \eqref{cong},  we have
\begin{equation}
    X = S \Lambda S^T = S_* \Lambda_* S_*^T= \check{q}^T\check{q}.
\end{equation}

Since $X=Z^T Z$ \eqref{md}, we have
$z_i^Tz_i=q_i^Tq_i$.  Then, we obtain 
\begin{equation}\label{md2}
    M_d =  \mathbf{1}_{d\!+\!2}b^T+ b\mathbf{1}^T_{d\!+\!2}-2Z^T Z = \mathbf{1}_{d\!+\!2}b^T+ b\mathbf{1}^T_{d\!+\!2}-2\check{q}^T\check{q}.
\end{equation}

From \eqref{md2}, we have
\begin{equation}
\| q_j \!-\! q_k\|_2= \| p_j \!-\! p_k\|_2, \ \text{for any} \ j,k \in \{i, j_0, \cdots, j_d \}.     
\end{equation}

Hence, the configurations $\check{q}$ and $\check{p}$ are congruent. From the above Lemma $4$, the displacement constraint \eqref{root1} is invariant to translations and rotations, i.e., congruent configurations $\check{p}$ and $\check{q}$ have the same displacement constraint. Hence, the displacement parameters $h_{ij_0}, h_{ij_1}, \cdots, h_{ij_d}$ in the displacement constraint $\sum\limits_{k=0}^{d}h_{ij_k} e_{ij_k} \!=\! \mathbf{0}$ can be calculated by solving the following matrix equation,
\begin{equation}\label{mi}
\left[ \!
\begin{array}{c c c c}
q_{j_0}\!-\!q_i & q_{j_1}\!-\!q_i  & \cdots &  q_{j_d}\!-\!q_i \\
\end{array}
\right]  \left[ \!
	\begin{array}{c}
	h_{ij_0} \\
	h_{ij_1} \\
	\vdots \\
	h_{ij_d}
	\end{array}
	\right] = \mathbf{0},
\end{equation}
where $q_i, q_{j_0}, \cdots, q_{j_d}$ is obtained by Algorithm \ref{disa}.

\subsection{Ratio-of-distance-based Displacement Constraint}

If the inter-agent ratio-of-distance measurements are available, we can obtain the ratio-of-distance matrix $M^r_d$ of follower $i$ and its $d\!+\!1$ neighbors ${j_0}, {j_1}, \cdots, {j_d}$ in $\mathbb{R}^d(d\ge2)$. 
For example, if follower $i$ is in 2-D space, i.e., $d=2$,
the ratio-of-distance matrix $M_2^r$ of follower $i$ and its three neighbors $j_0, j_1, j_2$ is given by
\begin{equation}\label{distance1}
M_2^r = \frac{1}{d_{j_0j_1}^2}
    \left[ \!
\begin{array}{c c c c }
0 & d_{ij_0}^2   & d_{ij_1}^2 &  d_{ij_2}^2 \\
d_{j_0i}^2 & 0  & d_{j_0j_1}^2 & d_{j_0j_2}^2 \\
d_{j_1i}^2 & d_{j_1j_0}^2 & 0 & d_{j_1j_2}^2  \\
d_{j_2i}^2 & d_{j_2j_0}^2 & d_{j_2j_1}^2 & 0
\end{array}
\right].
\end{equation}

Remark $5$. Under Assumption \ref{csr},
if the $d\!+\!1$ neighbors of follower $i$ get to their target positions, i.e., the $d\!+\!1$ neighbors ${j_0}, \cdots, {j_d}$  are not on a hyperplane in $\mathbb{R}^d$, we have $d_{j_0j_1} \neq 0$.

Based on the ratio-of-distance matrix $M_d^r$, we can obtain the similar configuration $\check{q}$ of the configuration $\check{p}$ 
by Algorithm \ref{disa}, where the distance matrix $M_d$ is replaced by the ratio-of-distance matrix $M_d^r$. From the above Lemma $4$, the displacement constraint \eqref{root1} is invariant to translations, rotations, and scalings i.e., similar configurations $\check{p}$ and $\check{q}$ also have the same displacement constraint. Hence, the displacement parameters $h_{ij_0}, h_{ij_1}, \cdots, h_{ij_d}$ in the displacement constraint $\sum\limits_{k=0}^{d}h_{ij_k} e_{ij_k} \!=\! \mathbf{0}$
can also be calculated by \eqref{mi}.

\begin{figure}[t]
\centering
\includegraphics[width=0.8\linewidth]{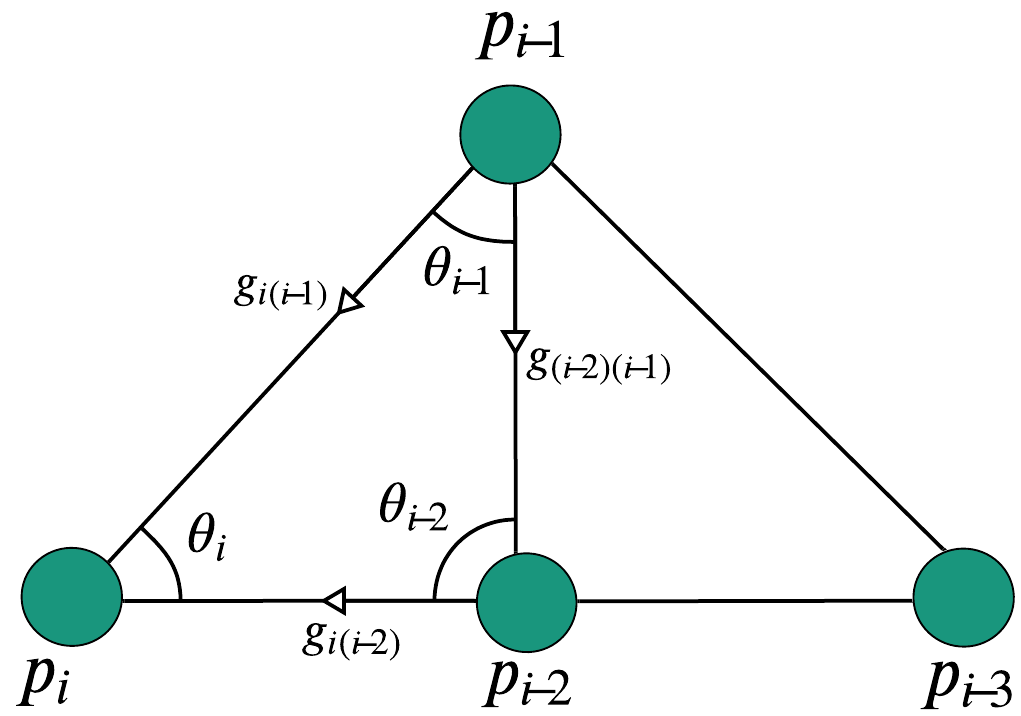}
\caption{2-D network.}
\label{2de}
\end{figure}

\subsection{Angle-based Displacement Constraint}

There are two cases:
\begin{enumerate}[(i)]
    \item  For follower $i$, 
    there is a neighboring agent $j \!=\! j_0, \cdots, \text{or}, j_d$ satisfying
    $\theta_{kih} \!=\! \theta_{kjh}, k, h \in \{j_0, \cdots, j_d \}, k \neq i  \neq h \neq j$;
    \item For follower $i$,
    there is no neighboring agent $ j \!=\! j_0, \cdots, \text{or}, j_d$ satisfying
    $\theta_{kih} \!=\! \theta_{kjh}, k, h \in \{j_0, \cdots, j_d \}, k \neq i  \neq h \neq j$.
\end{enumerate}

For the case (\romannumeral1), the displacement parameter $h_{ij}$ in \eqref{root1} is set as  $h_{ij}=1 (j \!=\! j_0, \cdots, \text{or}, j_d)$, and other displacement parameters in \eqref{root1} are set as $h_{ik}=0 (k \neq j, \ k = j_0, \cdots, j_d)$. For the case (\romannumeral2) under Assumption \ref{csr}, if the $d\!+\!1$ neighbors of follower $p_i$ get to their target positions, there must be an agent of $i, {j_0}, \cdots, {j_d}$ that can form a triangle with any other two agents. As shown in the above Fig. \ref{2de}, there is an agent ${j_0}$ that can form a triangle with any other two agents. For the triangle $\bigtriangleup p_ip_{j_0}p_{j_1}$, based on the sine rule, we can obtain the ratio-of-distances by using the angles shown as
\begin{equation}
 \frac{d_{ij_0}}{d_{j_0j_1}} = \frac{\sin \theta_{j_1}}{\sin \theta_{i}},  \ \ \  \frac{d_{ij_1}}{d_{ij_1}} =  \frac{\sin \theta_{j_1}}{\sin \theta_{j_0}},
\end{equation}
where $\theta_i, \theta_{j_1}, \theta_{j_1}$ are angles. Similarly, we can obtain the rest ratio-of-distances in $M_2^r$ \eqref{distance1} by the triangles  $\bigtriangleup p_ip_{j_0}p_{j_2}$ and $\bigtriangleup p_{j_0}p_{j_1}p_{j_2}$. Then, the displacement constraint of follower $i$ and its three neighbors $ {j_0}, {j_1}, {j_2}$ can be obtained by ratio-of-distance matrix $M_2^r$ through Algorithm \ref{disa} and \eqref{mi}. Hence, for the case (\romannumeral2), we can obtain the displacement constraint in $\mathbb{R}^d(d\ge2)$ by only using the angles.

\subsection{Bearing-based Displacement Constraint}\label{se2}

There are two cases:
\begin{enumerate}[(i)]
    \item  For follower $i$, 
    there is a neighboring agent $j \!=\! j_0, \cdots, \text{or}, j_d$ satisfying
    $g_{ik}=g_{jk}, k\in \{j_0, \cdots, j_d \}, k \neq i \neq j$;
    \item For follower $i$,
    there is no neighboring agent $j \!=\! j_0, \cdots, \text{or}, j_d$ satisfying
    $g_{ik}=g_{jk}, k\in \{j_0, \cdots, j_d \}, k \neq i   \neq j$.
\end{enumerate}

For the case (\romannumeral1), the displacement parameter $h_{ij}$ in \eqref{root1} is set as  $h_{ij}=1 (j \!=\! j_0, \cdots, \text{or}, j_d)$, and other displacement parameters in \eqref{root1} are set as $h_{ik}=0 (k \neq j, \ k = j_0, \cdots, j_d)$.
Under Assumption \ref{csr}, if the $d\!+\!1$ neighbors of follower $p_i$ get to their target positions, the positions of its $d\!+\!1$ neighbors $p_{j_0}, \cdots, p_{j_d}$  are not on a hyperplane in $\mathbb{R}^d$. Hence, 
for the case (\romannumeral2) under Assumption \ref{csr}, 
there must be an agent of $i, {j_0}, \cdots, {j_d}$ that can form a triangle with any other two agents. A 2-D example is given in the above Fig. \ref{2de}, where the three neighbors $ {j_0}, {j_1}, {j_2}$ of follower $i$ are not on a hyperplane (line) in $\mathbb{R}^2$. There is an agent ${j_0}$ that can form a triangle with any other two agents. For the triangle $\bigtriangleup p_ip_{j_0}p_{j_1}$, follower $i$ needs to obtain bearing $g_{j_0j_1}$ by communicating with agent $j_0$ or agent $j_1$. Then, follower $i$ can obtain the ratio-of-distances by using the bearings through the sine rule, i.e.,
\begin{equation}
 \frac{d_{ij_0}}{d_{j_0j_1}} = \frac{\sin \theta_{j_1}}{\sin \theta_{i}},  \ \ \  \frac{d_{ij_1}}{d_{ij_1}} =  \frac{\sin \theta_{j_1}}{\sin \theta_{j_0}},
\end{equation}
where
\begin{equation}\label{adf1}
\begin{array}{ll}
     &   \theta_{j_1} = \arccos (g_{ij_1}^Tg_{j_0j_1}), \  \theta_{i} = \arccos (g_{ij_1}^Tg_{ij_0}), \   \\
     & \theta_{j_0} = \arccos (g_{j_0i}^Tg_{j_0j_1}).
\end{array}
\end{equation}

Similarly, we can obtain the rest ratio-of-distances in $M_2^r$ \eqref{distance1} by the triangles  $\bigtriangleup p_ip_{j_0}p_{j_2}$ and $\bigtriangleup p_{j_0}p_{j_1}p_{j_2}$. Then, the displacement constraint of follower $i$ and its three neighbors $ {j_0}, {j_1}, {j_2}$ can be obtained by ratio-of-distance matrix $M_2^r$ through Algorithm \ref{disa} and \eqref{mi}. 
Hence, for the case (\romannumeral2), we can obtain the displacement constraint in $\mathbb{R}^d(d\ge2)$ by using the bearings. 

Note that the angles can also be obtained by local bearings, e.g., $\theta_{i} = \arccos ([g^{[i]}_{ij_1}]^Tg^{[i]}_{ij_0})$, where the superscripts $[i]$ represent the local coordinate frames of agent $i$.
That is,
the proposed method can be extended to local bearings. For example, If the bearing vector $g_{j_0j_1}$ can be obtained in the local coordinate frames of its neighbors $j_0,j_1 \in \mathcal{N}_i$, agent $i$ needs to communicate with its neighbor $j_0$ or $j_1$ to obtain the local bearing vector $g^{[j_0]}_{j_0j_1}$ or $g^{[j_1]}_{j_0j_1}$. Then, \eqref{adf1} becomes 
\begin{equation}
\begin{array}{ll}
     & \theta_{j_1} = \arccos ([g^{[j_1]}_{ij_1}]^Tg^{[j_1]}_{j_0j_1}), \  \theta_{i} = \arccos ([g^{[i]}_{ij_1}]^Tg^{[i]}_{ij_0}), \    \\
     &  \theta_{j_0} = \arccos ([g^{[j_0]}_{j_0i}]^Tg^{[j_0]}_{j_0j_1}).
\end{array}
\end{equation}

\subsection{Relative-position-based Displacement Constraint}

The proposed method can also be extended to local relative positions.
The local relative positions between 
follower $i$ and its $d\!+\!1 (d \! \ge \! 2)$ neighbors ${j_0}, \cdots, {j_{d}}$ are denoted by $e^{[i]}_{ij_0}(t),  \cdots, e^{[i]}_{ij_d}(t)$, respectively.
Note that $e_{ij_0}(t)=Q_i(t)e^{[i]}_{ij_0}(t),  \cdots, e_{ij_d}(t)=Q_i(t)e^{[i]}_{ij_d}(t)$, where $Q_i(t) \in SO(d)$ is the unknown rotation matrix of follower $i$. Then, the displacement constraint \eqref{root1} becomes
\begin{equation}\label{root2}
 \sum\limits_{k=0}^d Q_i(t)h_{ij_k}(t) e^{[i]}_{ij_k}(t) \!=\! \mathbf{0}.
\end{equation}

Note that $[Q_i(t)]^TQ_i(t)=I_d$. Although the rotation matrix $Q_i(t)$ is unknown, The displacement parameters $h_{ij_0}(t),  \cdots,  h_{ij_d}(t)$ in \eqref{root2} can be obtained by local relative positions through solving an equivalent equation of \eqref{root2} shown as
\begin{equation}\label{wmi}
\left[ \!
\begin{array}{c c c c}
e^{[i]}_{ij_0}(t) & e^{[i]}_{ij_1}(t) & \cdots & e^{[i]}_{ij_d}(t) \\
\end{array}
\right]  \left[ \!
	\begin{array}{c}
	h_{ij_0}(t) \\
	h_{ij_1}(t) \\
	\vdots \\
	h_{ij_d}(t)
	\end{array}
	\right] = \mathbf{0}.
\end{equation}

\subsection{Mixed-measurement-based Displacement Constraint}\label{sef}

If the multi-agent system is equipped with mixed types of measurements, i.e., some agents can measure only distances, while others can measure only local relative positions,  local bearings, angles, or
ratio-of-distances, the displacement constraint can also be obtained. The method of how to obtain the displacement constraint based on mixed measurements can be found in 
Section IV-B of \cite{fang20203}. Other measurements such as acoustic waves \cite{chen2020air} and measured absolute outputs \cite{zhou2022semiglobal} of the agents will be considered in the near future.

\subsection{Co-follower Group and Relaxed Graph}

\begin{figure}[t]
\centering
\includegraphics[width=0.8\linewidth]{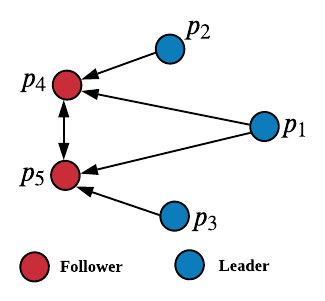}
\caption{$p_4, p_5$ forms a co-follower group.}
\label{2dneighbor}
\end{figure}

In this article, only the first follower has $d\!+\!1$ leaders as neighbors in $\mathbb{R}^d$.
The proposed method can be extended to the case that no follower has $d\!+\!1$ leaders as neighbors yet still localizable. One example in $\mathbb{R}^2$ is given in Fig. \ref{2dneighbor}, where
the leaders $1, 2, 3$ are noncolinear.
The followers $4, 5$ can measure local relative positions. Follower $4$ can only obtain the local relative positions from two leaders $1, 2$, while follower $5$ can only obtain the local relative positions from two leaders $1, 3$, i.e., no follower has three leaders as neighbors. But the followers $4$ and $5$ can still be localized by the leaders $1, 2, 3$ through exchanged information, i.e., followers $4$ and $5$ exchange their available information. Then, the followers $4, 5$ can implement the proposed controllers for the followers to achieve distributed localization and formation maneuver control.

\begin{defn}\label{dfkl1}
A group of followers $\mathcal{V}_{c} = \{{i_1}, \cdots, {i_z}\}$  is
called a co-follower group if 
the following conditions hold: 

(1) The cardinality of the neighbors of a co-follower group is at least $d\!+\!1$ in $\mathbb{R}^d$. For example, there are three neighbors $1, 2, 3$ of the co-follower group $4, 5$ in $\mathbb{R}^2$ shown in Fig. \ref{2dneighbor};

(2) The followers in $\mathcal{V}_c$ can be  localized by their neighbors through exchanged information if their $d\!+\!1$
neighbors are not on a hyperplane in $\mathbb{R}^d$.    
\end{defn}

\begin{defn}\label{dfkl2}
If follower $i$ does not belong to any co-follower group, follower $i$ is called single-follower.    
\end{defn}

Then, the graph condition can be relaxed based on the concepts of co-follower group and single-follower given below:
\begin{enumerate}[(1)]
\item The agents in $\mathcal{G}$ are divided into $\kappa \!>\! 1$ subsets $\mathcal{V}_1,  \mathcal{V}_{2},$ $\cdots, \mathcal{V}_{{\kappa}}$, where $\mathcal{V}_i \cap \mathcal{V}_j = \emptyset$ if $i \neq j$. Agent $i$ is called in layer $g$ if  $i \in \mathcal{V}_{g} (1 \le g \le \kappa)$. Subset $\mathcal{V}_1$ includes all leaders, i.e., $\mathcal{V}_1=\mathcal{V}_l$. The union of the subsets $\mathcal{V}_{2}, \cdots, \mathcal{V}_{{\kappa}}$ include all followers, i.e.,  $\bigcup\limits_{g=2}^{\kappa} \mathcal{V}_g = \mathcal{V}_f$;

\item Each single-follower or co-follower group  only belongs to one follower subset $\mathcal{V}_{g}( g=2, \cdots, \kappa$);
\item Each single-follower $i$ 
is $d\!+\!1$-reachable from the leader set $V_1$, and its neighbor set ${N}_i$ is given by
\begin{equation}
\mathcal{N}_i = \{ j \in \bigcup\limits_{s=1}^{g} \mathcal{V}_{s}:  j <i, \ (i,j) \in \mathcal{E}, \ i \in \mathcal{V}_{g} \}.
\end{equation}

\item Each co-follower group 
is $d\!+\!1$-reachable from the leader set $V_1$, and the neighbor set $\mathcal{N}_c$ of each co-follower group $\mathcal{V}_{c} = \{{i_1}, \cdots, {i_z}\}$  is given by 
\begin{equation}\label{neig2}
\begin{array}{ll}
     & \mathcal{N}_c = \{ j \in \bigcup\limits_{s=1}^{g}:  (i_s,j) \in \mathcal{E}, \ j<i_s, \\
     & \ \ \ \ \ \ \ \ \  i_s \in \mathcal{V}_g, \ s=1, \cdots, \text{or} \ z \}.   
\end{array}
\end{equation}

\end{enumerate}

\subsection{Continuous Velocity Controller}

To guarantee the continuity of velocity controllers of the followers, we only need to guarantee the continuity of velocity controllers of the followers when they switch between maintaining mode and maneuvering mode.

\begin{lemma}
    The velocity controller of the follower $i \in \mathcal{V}_f$ in $\mathbb{R}^d$ is continuous when it switches between maintaining mode and maneuvering mode if 
the parameters $a_1$ in (45) and $a_2$ in (50)  are designed as
\begin{equation}\label{switch}
 a_1 = \text{diag}(\xi), \ \ a_2 = \text{diag}(\varsigma), 
\end{equation}
where 
\begin{equation}
\begin{array}{ll}
    & \xi = [\xi_{1}, \cdots, \xi_d],  \ \ \varsigma =[\varsigma_{1}, \cdots, \varsigma_d], \\
     & [\eta_{i1}, \cdots, \eta_{id}]^T =  a_2 \hat p_i \!+\!  a_2\sum\limits_{j \in \mathcal{N}_i} \frac{w_{ij}}{w_{ii}}{\hat p}_{j}, \\
        &   \frac{\xi_{k}}{\varsigma_{k}}\eta_{ik} = \eta_{ik} + \text{sig}^{a_3}(\eta_{ik} ), \ \xi_{k}, \varsigma_{k} >0,  \ k= 1, \cdots, d.
\end{array}
\end{equation}
\end{lemma}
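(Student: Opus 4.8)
The plan is to establish that, at any instant $t^*$ when follower $i$ switches between the two modes, the control input $u_i$ produced by the maintaining-mode law \eqref{int1} equals the value produced by the maneuvering-mode law \eqref{int2}. Within each mode $u_i$ is a continuous function of the continuously evolving signals $\hat p_i,\hat p_j,\hat v_j$ (recall $\text{sig}^{a_3}(\cdot)$ is continuous), so the only possible discontinuity is at $t^*$, and matching the two expressions there is exactly what is required. First I would isolate the feedforward term $\sum_{j \in \mathcal{N}_i}\frac{w_{ij}}{w_{ii}}\hat v_j$, which appears identically in \eqref{int1} and \eqref{int2}. Since it depends only on the neighbors' signals, it takes the same value from either side of $t^*$ and cancels in the comparison, reducing the problem to the feedback parts.

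Next I would introduce the common internal quantity
\begin{equation}
\phi_i := -\hat p_i + \sum_{j \in \mathcal{N}_i}\frac{w_{ij}}{w_{ii}}\hat p_j ,
\end{equation}
which is continuous across $t^*$ because it is a fixed function of the continuous states. Collecting terms in \eqref{int1} gives $u_i = a_1 \phi_i + \sum_{j \in \mathcal{N}_i}\frac{w_{ij}}{w_{ii}}\hat v_j$, whereas in \eqref{int2} the feedback variable is $\eta_i = a_2\phi_i$, so that $u_i = \eta_i + \text{sig}^{a_3}(\eta_i) + \sum_{j \in \mathcal{N}_i}\frac{w_{ij}}{w_{ii}}\hat v_j$. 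After discarding the common feedforward, continuity of $u_i$ at $t^*$ is equivalent to the vector identity $a_1\phi_i = \eta_i + \text{sig}^{a_3}(\eta_i)$. Taking $a_1 = \text{diag}(\xi)$, $a_2 = \text{diag}(\varsigma)$ and using $\eta_{ik} = \varsigma_k \phi_{ik}$, this decouples into the $d$ scalar equations $\xi_k \phi_{ik} = \eta_{ik} + \text{sig}^{a_3}(\eta_{ik})$; substituting $\phi_{ik} = \eta_{ik}/\varsigma_k$ yields precisely the design condition $\frac{\xi_k}{\varsigma_k}\eta_{ik} = \eta_{ik} + \text{sig}^{a_3}(\eta_{ik})$ in \eqref{switch}. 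The same matching applies verbatim in either switching direction, which completes the core argument.

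The main obstacle is the feasibility of \eqref{switch}, because the required gain ratio is fixed by the instantaneous value realized at the switch, namely $\frac{\xi_k}{\varsigma_k} = 1 + |\eta_{ik}|^{a_3-1}$, so the gains must be chosen online at $t^*$ rather than a priori. I would verify that for $\eta_{ik}\neq 0$ this ratio is a finite positive number, guaranteeing existence of positive $\xi_k,\varsigma_k$, and that the degenerate component $\eta_{ik}=0$ (equivalently $\phi_{ik}=0$) is automatic since the feedback part of both laws vanishes there for any positive gains. Finally I would check that replacing the scalar gains by the positive-definite diagonal matrices $\text{diag}(\xi)$ and $\text{diag}(\varsigma)$ leaves the Lyapunov estimates behind Theorem \ref{mtml} and Theorem \ref{mvml} intact, so boundedness in the maintaining mode and finite-time convergence in the maneuvering mode are preserved under this gain choice.
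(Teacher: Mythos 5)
Your proposal is correct, and in fact it supplies an argument the paper itself omits: the lemma in the ``Continuous Velocity Controller'' subsection is stated without any proof, so there is no published proof to compare against. Your reduction is exactly the computation the design condition encodes: writing both laws in terms of $\phi_i=-\hat p_i+\sum_{j\in\mathcal{N}_i}\tfrac{w_{ij}}{w_{ii}}\hat p_j$, cancelling the common feedforward $\sum_{j\in\mathcal{N}_i}\tfrac{w_{ij}}{w_{ii}}\hat v_j$, and matching $a_1\phi_i$ with $\eta_i+\text{sig}^{a_3}(\eta_i)$ componentwise gives precisely $\tfrac{\xi_k}{\varsigma_k}\eta_{ik}=\eta_{ik}+\text{sig}^{a_3}(\eta_{ik})$, i.e.\ \eqref{switch}. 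Your two supplementary observations are also the right ones to make explicit: the required ratio $\tfrac{\xi_k}{\varsigma_k}=1+|\eta_{ik}|^{a_3-1}$ is fixed by the value of $\eta_{ik}$ at the switching instant (so the gains are selected online, per follower, at its switch, which is feasible here because each follower switches modes once in the sequential convergence of Theorem \ref{them1}), and the component $\eta_{ik}=0$ is degenerate but harmless since both feedback terms vanish there. The final check you defer does go through: with $a_1=\text{diag}(\xi)$ the matrix $D_a$ in Theorem \ref{mtml} becomes $\left[\begin{smallmatrix}1&1\\1&1\end{smallmatrix}\right]\otimes\text{diag}(\xi)$, still positive semidefinite, and with $a_2=\text{diag}(\varsigma)$ the matrix $D_b$ in Theorem \ref{mvml} becomes $\left[\begin{smallmatrix}\text{diag}(\varsigma)&\text{diag}(\varsigma)\\ \text{diag}(\varsigma)&\text{diag}(\varsigma)+a_4I_d\end{smallmatrix}\right]$, which is positive definite by the Schur complement (the complement equals $a_4I_d$), so boundedness and finite-time convergence are preserved.
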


\subsection{Further Comparison with the Existing Results}\label{sefw}

The further comparison with the existing non-relative-position-based results \cite{mehdifar20222, cao2011formation, jiang2016simultaneous,nguyen2019persistently,han2018integrated, cao2019relative,yang2020distributed,chen2022simultaneous}
are given below: 

\begin{enumerate}[(i)]

    \item From the "global shape convergence" point of view,  the local convergence is realized in \cite{cao2011formation, jiang2016simultaneous}, and the (almost) global convergence is realized in this work and \cite{mehdifar20222, cao2019relative,han2018integrated}. The work in \cite{nguyen2019persistently} achieves semi-global convergence. The asymptotic convergence in \cite{yang2020distributed,chen2022simultaneous} is neither local nor global.

    \item  From the "assumptions on the formation graph" point of view, the formation graphs are undirected in \cite{jiang2016simultaneous, chen2022simultaneous} and directed in \cite{mehdifar20222, cao2011formation,nguyen2019persistently,han2018integrated, cao2019relative,yang2020distributed}  and this work.

    \item From the "coordinate frame" point of view, the work in \cite{mehdifar20222} is applicable to unaligned local coordinate frames, and the works in \cite{cao2011formation,nguyen2019persistently,han2018integrated, jiang2016simultaneous, cao2019relative,yang2020distributed,chen2022simultaneous} and this work are applicable to global coordinate frame.

\end{enumerate}

% Can use something like this to put references on a page
% by themselves when using endfloat and the captionsoff option.
\ifCLASSOPTIONcaptionsoff
  \newpage
\fi

% \balance
% \bibliographystyle{IEEEtran}
% \bibliography{papers}

% \begin{thebibliography}{1}

% \bibitem{IEEEhowto:kopka}
% H.~Kopka and P.~W. Daly, \emph{A Guide to \LaTeX}, 3rd~ed.\hskip 1em plus
%   0.5em minus 0.4em\relax Harlow, England: Addison-Wesley, 1999.

% \end{thebibliography}

\bibliographystyle{IEEEtran}
\bibliography{papers}

% biographies
%\newpage

\end{document}